\documentclass[11pt]{article}
\usepackage{MastenShortcuts}
\usepackage{color}

\usepackage{wrapfig}
\usepackage{tikz-cd}
\usepackage{tikz}
\usepackage{pgfplots}

\RequirePackage{amsthm}

\theoremstyle{definition}

\newtheorem{proposition}{Proposition}
\newtheorem{lemma}{Lemma}

\newtheorem{definition}{Definition}
\newtheorem{theorem}{Theorem}
\newtheorem{corollary}{Corollary}

\newenvironment{assump}[2][Assumption]{\begin{trivlist}
\item[\hskip \labelsep {\bfseries #1}\hskip \labelsep {\bfseries #2}]}{\end{trivlist}}

\newcounter{partialIndepSection}
\newcommand{\aAssump}{A\arabic{partialIndepSection}}

\newtheoremstyle{theoremSuppressedNumber}{}{}{}{}{\bfseries}{.}{ }{\thmname{#1}\thmnote{ (\mdseries #3)}}
\theoremstyle{theoremSuppressedNumber}
\newtheorem{partialIndepAssump}{Assumption \aAssump \addtocounter{partialIndepSection}{1}}

\usepackage{setspace}
\setlength{\evensidemargin}{0in}
\setlength{\oddsidemargin}{0in}
\setlength{\textwidth}{6.5in}
\setlength{\textheight}{9in}
\setlength{\topmargin}{-0.5in}

\usepackage{natbib}
\usepackage{ulem}
\normalem

\usepackage{subfig}
\usepackage{wrapfig}
\usepackage{float}


\usepackage{epstopdf}

\title{\textbf{Identification of Treatment Effects under \\ Conditional Partial Independence}\footnote{This paper is based on portions of our previous working papers, \cite{MastenPoirier2016,MastenPoirier2017}. We thank audiences at various seminars, as well as Federico Bugni, Ivan Canay, Joachim Freyberger, Joe Hotz, Guido Imbens, Shakeeb Khan, Chuck Manski, Jim Powell, Adam Rosen, Suyong Song, and Alex Torgovitsky, for helpful conversations and comments.}}

\author{Matthew A. Masten\footnote{Department of Economics, Duke University,
        \texttt{matt.masten@duke.edu}} \qquad Alexandre Poirier\thanks{
    Department of Economics, University of Iowa,
    \texttt{alexandre-poirier@uiowa.edu}}
}

\begin{document}
\maketitle
\begin{abstract}
Conditional independence of treatment assignment from potential outcomes is a commonly used but nonrefutable assumption. We derive identified sets for various treatment effect parameters under nonparametric deviations from this conditional independence assumption. These deviations are defined via a conditional treatment assignment probability, which makes it straightforward to interpret. Our results can be used to assess the robustness of empirical conclusions obtained under the baseline conditional independence assumption.
\end{abstract}

\bigskip
\small
\noindent \textbf{JEL classification:}
C14; C18; C21; C51
%

\bigskip
\noindent \textbf{Keywords:}
Treatment Effects, Conditional Independence, Unconfoundedness, Selection on Observables, Sensitivity Analysis, Nonparametric Identification, Partial Identification


\onehalfspacing
\normalsize
\newpage
\section{Introduction}

The treatment effect model under conditional independence is widely used in empirical research. The conditional independence assumption states that, after conditioning on a set of observed covariates, treatment assignment is independent of potential outcomes. This assumption has many other names, including unconfoundedness, ignorability, exogenous selection, and selection on observables. It delivers point identification of many parameters of interest, including the average treatment effect, the average effect of treatment on the treated, and quantile treatment effects. \cite{ImbensRubin2015} provide a recent overview of this literature.

Without additional data, like the availability of an instrument, the conditional independence assumption is not refutable: The data alone cannot tell us whether the assumption is true. Moreover, conditional independence is often considered a strong and controversial assumption. Consequently, empirical researchers may wonder: How credible are treatment effect estimates obtained under conditional independence?

In this paper, we address this concern by studying what can be learned about treatment effects under a nonparametric class of assumptions that are weaker than conditional independence. While there are many ways to weaken independence, we focus on just one, which we call \emph{conditional $c$-dependence}.\footnote{See \cite{MastenPoirier2016} for an analysis and discussion of several other approaches. We refer to any of these approaches as \emph{partial} independence assumptions.} This assumption states that the probability of being treated given observed covariates and an unobserved potential outcome is not too far from the probability of being treated given just the observed covariates. We use the sup-norm distance, where the scalar $c$ denotes how much these two conditional probabilities may differ. This class of assumptions nests both the conditional independence assumption and the opposite end of no constraints on treatment selection.

In our first main contribution, we derive sharp bounds on conditional cdfs that are consistent with conditional $c$-dependence. This result can be used in many models, including the treatment effects model we study here.\footnote{See \cite{MastenPoirier2016} for several other applications of this result.} In that model, as our second main contribution, we derive identified sets for many parameters of interest. These include the average treatment effect, the average effect of treatment on the treated, and quantile treatment effects. These identified sets have simple, analytical characterizations. Empirical researchers can use these identified sets to examine how sensitive their parameter estimates are to deviations from the baseline assumption of conditional independence. We illustrate this sensitivity analysis in a brief numerical example.

\subsection*{Related literature}

In the rest of this section, we review the related literature. As discussed in section 22.4 of \cite{ImbensRubin2015}, a large literature starting with the seminal work of \cite{RosenbaumRubin1983sensitivity} relaxes conditional independence by modeling the conditional probabilities of treatment assignment given both observable and unobservable variables parametrically. This literature also typically imposes a parametric model on outcomes. This includes \cite{LinPsatyKronmal1998}, \cite{Imbens2003}, and \cite{AltonjiElderTaber2005,AltonjiElderTaber2008}. An important exception is \cite{RobinsRotnitzkyScharfstein2000}, who relax parametric assumptions on outcomes. They continue to use parametric models for treatment assignment probabilities, however, when applying their results. Our work builds on this literature by developing fully nonparametric methods for sensitivity analysis. Our new methods can ensure that empirical findings of robustness do not rely on auxiliary parametric assumptions.

We are aware of only two previous analyses in this sensitivity analysis literature which develop fully nonparametric methods. The first is \cite{IchinoMealliNannicini2008}, who avoid specifying a parametric model by assuming that all observed and unobserved variables are discretely distributed, so that their joint distribution is determined by a finite dimensional vector. Unlike our approach, theirs rules out continuous outcomes. It also involves many different sensitivity parameters, while our approach uses only one sensitivity parameter.

The second is \cite{Rosenbaum1995,Rosenbaum2002}, who proposes a sensitivity analysis within the context of randomization inference for testing the sharp null hypotheses of no unit level treatment effects for all units in one's dataset. Since this approach is based on finite sample randomization inference (c.f., chapter 5 of \citealt{ImbensRubin2015}), rather than population level identification analysis, this is quite different from the approaches discussed above and from what we do in the present paper. Like our results, however, his approach does not impose a parametric model on treatment assignment probabilities.

A large literature initiated by Manski has studied identification problems under various assumptions which typically do not point identify the parameters (e.g., \citealt{Manski2007}). In the context of missing data analysis, \cite{Manski2016} suggested imposing a class of assumptions which includes conditional $c$-dependence. He did not, however, derive any identified sets under this assumption. Several papers study partial identification of treatment response under deviations from mean independence assumptions, rather than the statistical independence assumption we start from. \cite{ManskiPepper2000,ManskiPepper2009} relax mean independence to a monotonicity constraint in the conditioning variable, while \cite{HotzMullinSanders1997} suppose mean independence only holds for some portion of the population. These relaxations and conditional $c$-dependence are non-nested. Moreover, these papers focus on mean potential outcomes, while we also study quantiles and distribution functions. Finally, Manski's original no assumptions bounds for average treatment effects (\citealt{Manski1989,Manski1990}) are obtained as a special case of our conditional $c$-dependence ATE bounds when $c$ is sufficiently large.

\section{Model, assumptions, and interpretation}\label{sec:Model}

We study the standard potential outcomes model with a binary treatment. In this section we setup the notation and some maintained assumptions. We define our parameters of interest and state the key assumption which point identifies them: random assignment of treatment, conditional on covariates. We discuss how we relax this assumption. We derive identified sets under these relaxations in section \ref{sec:identification}. We conclude this section by suggesting a few ways to interpret our deviations from conditional independence.

\subsection*{Basic setup}

Let $Y$ be an observed scalar outcome variable and $X \in \{ 0,1\}$ an observed binary treatment. Let $Y_1$ and $Y_0$ denote the unobserved potential outcomes. As usual the observed outcome is related to the potential outcomes via the equation
\begin{equation}\label{eq:potential outcomes}
	Y = X Y_1 + (1-X) Y_0.
\end{equation}

Let $W \in \supp(W)$ denote a vector of observed covariates, which may be discrete, continuous, or mixed. Let $p_{x \mid w} = \Prob(X=x \mid W=w)$ denote the observed generalized propensity score (\citealt{Imbens2000}). We consider both continuous and binary potential outcomes. We begin with the continuous outcome case, where we maintain the following assumption on the joint distribution of $(Y_1,Y_0,X,W)$.

\setcounter{partialIndepSection}{1}
\begin{partialIndepAssump}\label{assn:continuity}
For each $x,x' \in \{0,1\}$ and $w \in \supp(W)$:
\begin{enumerate}
\item \label{A1_1} $Y_x \mid X=x', W=w$ has a strictly increasing and continuous distribution function on $\supp(Y_x \mid X=x', W=w)$.

\item \label{A1_2} $\supp(Y_x \mid X=x',W=w) = \supp(Y_x \mid W=w) = [\underline{y}_x(w),\overline{y}_x(w)]$ where $-\infty \leq \underline{y}_x(w) < \overline{y}_x(w) \leq \infty$.

\item \label{A1_3} $p_{1 \mid w} \in (0,1)$ for all $w \in \supp(W)$.
\end{enumerate}
\end{partialIndepAssump}

By equation \eqref{eq:potential outcomes},
\[
	F_{Y \mid X,W}(y \mid x,w) = \Prob(Y_x \leq y \mid X=x,W=w)
\]
and hence A\ref{assn:continuity}.\ref{A1_1} implies that the distribution function of $Y \mid X=x,W=w$ is also strictly increasing and continuous. By the law of iterated expectations, the marginal distributions of $Y$ and $Y_x$ have the same properties as well. We consider the binary outcome case on page \pageref{sec:binaryOutcomes}.

A\ref{assn:continuity}.\ref{A1_2} states that the conditional support of $Y_x$ given $X=x',W=w$ does not depend on $x'$, and that this support is a possibly infinite closed interval. The first equality is a `support independence' assumption, which is implied by the standard conditional independence assumption. Since $Y \mid X=x,W=w$ has the same distribution as $Y_x \mid X=x,W=w$, this implies that the support of $Y \mid X=x,W=w$ equals that of $Y_x \mid W=w$. Consequently, the endpoints $\underline{y}_x(w)$ and $\overline{y}_x(w)$ are point identified. A\ref{assn:continuity}.\ref{A1_3} is a standard overlap assumption.

Define the \emph{conditional rank} random variables $R_1 = F_{Y_1 \mid W}(Y_1 \mid W)$ and $R_0 = F_{Y_0 \mid W}(Y_0 \mid W)$. For any $w \in \supp(W)$, $R_1 \mid W=w$ and $R_0 \mid W=w$ are uniformly distributed on $[0,1]$, since $F_{Y_1 \mid W}(\cdot \mid w)$ and $F_{Y_0 \mid W}(\cdot \mid w)$ are strictly increasing. Moreover, by construction, both $R_1$ and $R_0$ are independent of $W$. The value of unit $i$'s conditional rank random variable $R_x$ tells us where unit $i$ lies in the conditional distribution of $Y_x \mid W=w$. We occasionally use these variables throughout the paper.

\subsection*{Identifying assumptions}

It is well known that the conditional distributions of potential outcomes $Y_1 \mid W$ and $Y_0 \mid W$ and therefore the marginal distributions of $Y_1$ and $Y_0$ are point identified under the following assumption:
\begin{itemize}
\item[] Conditional Independence: $X \independent Y_1 \mid W$ and $X \independent Y_0 \mid W$.
\end{itemize}
These marginal distributions are immediately point identified from
\[
	F_{Y_x \mid W}(y \mid w) = F_{Y \mid X,W}(y \mid x,w)
\]
and
\[
	F_{Y_x}(y) = \int_{\supp(W)} F_{Y \mid X,W}(y \mid x,w) \; dF_W(w).
\]
Consequently, any functional of $F_{Y_1 \mid W}$ and $F_{Y_0 \mid W}$ is also point identified under the conditional independence assumption. Leading examples include the average treatment effect,
\[
	\text{ATE} = \Exp(Y_1 - Y_0),
\]
and the $\tau$-th quantile treatment effect,
\[
	\text{QTE}(\tau) = Q_{Y_1}(\tau) - Q_{Y_0}(\tau),
\]
where $\tau \in (0,1)$. The goal of our identification analysis is to study what can be said about such functionals when conditional independence partially fails. To do this we define the following class of assumptions, which we call \emph{conditional $c$-dependence}.

\begin{definition}\label{def:c-dep}
Let $x \in \{ 0, 1 \}$. Let $w\in\supp(W)$. Let $c$ be a scalar between 0 and 1. Say $X$ is \emph{conditionally $c$-dependent} with $Y_x$ given $W=w$ if
\begin{equation}\label{eq:c-indep1}
	\sup_{y_x \in \supp(Y_x \mid W=w)} | \Prob(X=1 \mid Y_x=y_x,W=w) - \Prob(X=1 \mid W=w) | \leq c.
\end{equation}
If \eqref{eq:c-indep1} holds for all $w \in \supp(W)$ we say $X$ is conditionally $c$-dependent with $Y_x$ given $W$.
\end{definition}

Under the conditional independence assumption $X \independent Y_x \mid W$,
\[
	\P(X=1 \mid Y_x = y_x,W=w) = \Prob(X=1 \mid W=w)
\]
for all $y_x \in \supp(Y_x \mid W=w)$ and all $w \in \supp(W)$. Conditional $c$-dependence allows for deviations from this assumption by allowing the conditional probability $\Prob(X=1 \mid Y_x  = y_x, W=w)$ to be different from the propensity score $p_{1 \mid w}$, but not too different. This class of assumptions nests conditional independence as the special case where $c=0$. Moreover, when $c \geq \max \{ p_{1 \mid w}, p_{0 \mid w} \}$, from \eqref{eq:c-indep1} we see that conditional $c$-dependence imposes no constraints on $\Prob(X=1 \mid Y_x=y_x, W=w)$. Values of $c$ strictly between zero and $\max\{p_{1 \mid w},p_{0 \mid w}\}$ lead to intermediate cases. These intermediate cases can be thought of as a kind of limited selection on unobservables, since the value of one's unobserved potential outcome $Y_x$ is allowed to affect the probability of receiving treatment.

Beginning with \cite{RosenbaumRubin1983sensitivity}, many papers use parametric models for unobserved conditional probabilities similar to $\P(X=1 \mid Y_x = y_x,W=w)$ to model deviations from conditional independence. For example, see \cite{RobinsRotnitzkyScharfstein2000} and \cite{Imbens2003}. In contrast, conditional $c$-dependence is a nonparametric class of assumptions. Our results therefore ensure that empirical findings of robustness do not depend on any auxiliary parametric assumptions.

By invertibility of $F_{Y_x \mid W}(\cdot \mid w)$ for each $x \in \{0,1\}$ and $w \in \supp(W)$ (assumption A\ref{assn:continuity}.\ref{A1_1}), equation \eqref{eq:c-indep1} is equivalent to
\[
	\sup_{r_x \in [0,1]} | \Prob(X = 1 \mid R_x = r_x,W=w) - \Prob(X=1 \mid W=w) | \leq c.
	\tag{\ref{eq:c-indep1}$^\prime$}
\]
Using this result, we obtain the following characterization of conditional $c$-dependence.

\begin{proposition}\label{lemma:c-dep_KS_equivalence}
Suppose A\ref{assn:continuity}.\ref{A1_1} holds. Then $X$ is conditionally $c$-dependent with the potential outcome $Y_x$ given $W$ if and only if
\begin{equation}\label{eq:jointMinusMarginalProductEquivalenceLemma}
	\sup_{x' \in \{ 0,1 \} } \sup_{r \in [0,1]} | f_{X,R_x \mid W}(x',r \mid w) - p_{x' \mid w} f_{R_x \mid W}(r \mid w) | \leq c.
\end{equation}
\end{proposition}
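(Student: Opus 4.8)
The plan is to rewrite the joint-minus-marginal-product quantity in \eqref{eq:jointMinusMarginalProductEquivalenceLemma} until it coincides with the rank-based condition \eqref{eq:c-indep1}$^\prime$, and then invoke the equivalence between \eqref{eq:c-indep1}$^\prime$ and conditional $c$-dependence that the excerpt has already established. The entire argument is algebraic rewriting of densities and conditional probabilities; there is no genuine analytic difficulty.

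First I would use the fact, noted earlier in the paper, that $R_x \mid W=w$ is uniformly distributed on $[0,1]$, so that $f_{R_x \mid W}(r \mid w) = 1$ for every $r \in [0,1]$. This collapses the subtracted term to $p_{x' \mid w} f_{R_x \mid W}(r \mid w) = p_{x' \mid w}$. Next I factor the mixed joint density of the discrete $X$ and the continuous $R_x$ as
\[
	f_{X,R_x \mid W}(x',r \mid w) = \Prob(X=x' \mid R_x=r,W=w)\, f_{R_x \mid W}(r \mid w) = \Prob(X=x' \mid R_x=r,W=w),
\]
where the second equality again uses uniformity. Substituting both facts, the quantity inside the absolute value in \eqref{eq:jointMinusMarginalProductEquivalenceLemma} becomes $\Prob(X=x' \mid R_x=r,W=w) - \Prob(X=x' \mid W=w)$, i.e.\ exactly the deviation appearing in \eqref{eq:c-indep1}$^\prime$, but now indexed by an arbitrary $x' \in \{0,1\}$ rather than fixed at $x'=1$.

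The only remaining step is to show that the extra supremum over $x' \in \{0,1\}$ is redundant. Since $X$ is binary, $\Prob(X=0 \mid \cdot) = 1 - \Prob(X=1 \mid \cdot)$ holds for both the conditional-on-$(R_x,W)$ and the conditional-on-$W$ probabilities, so the $x'=0$ deviation is the negative of the $x'=1$ deviation:
\[
	\Prob(X=0 \mid R_x=r,W=w) - \Prob(X=0 \mid W=w) = -\bigl[ \Prob(X=1 \mid R_x=r,W=w) - \Prob(X=1 \mid W=w) \bigr].
\]
Taking absolute values makes the two terms coincide, so $\sup_{x' \in \{0,1\}}$ acts trivially and \eqref{eq:jointMinusMarginalProductEquivalenceLemma} reduces to the left-hand side of \eqref{eq:c-indep1}$^\prime$. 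Because the excerpt already records that \eqref{eq:c-indep1}$^\prime$ is equivalent to conditional $c$-dependence of $X$ with $Y_x$ given $W$, this finishes both directions of the equivalence at once.

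I do not expect a hard step, but the one place requiring care is justifying the density factorization rigorously in the mixed discrete–continuous setting and confirming that, under A\ref{assn:continuity}.\ref{A1_1}, the objects $\Prob(X=x' \mid R_x=r,W=w)$ and $f_{R_x \mid W}(r \mid w)$ are genuine, everywhere-defined continuous functions of $r$ rather than merely almost-everywhere equivalence classes. This regularity is what legitimizes writing $\sup_r$ in place of an essential supremum; verifying that the strict monotonicity and continuity in A\ref{assn:continuity}.\ref{A1_1} transfer from $Y_x$ to the rank variable $R_x$ is routine given the maintained assumptions, but it is the detail I would make explicit.
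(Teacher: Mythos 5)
Your proposal is correct and follows essentially the same route as the paper's proof: factor the mixed joint density as $\Prob(X=x'\mid R_x=r,W=w)\,f_{R_x\mid W}(r\mid w)$, use uniformity of $R_x\mid W=w$ to set $f_{R_x\mid W}(r\mid w)=1$, reduce the expression to $|\Prob(X=x'\mid R_x=r,W=w)-p_{x'\mid w}|$, and combine with the A\ref{assn:continuity}.\ref{A1_1}-based equivalence between \eqref{eq:c-indep1} and \eqref{eq:c-indep1}$^\prime$. The only difference is presentational: you make the redundancy of the supremum over $x'$ explicit via the binary-complement identity, whereas the paper leaves that step implicit with ``this holds for any $x'$.''
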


Proposition \ref{lemma:c-dep_KS_equivalence} shows that conditional $c$-dependence is equivalent to a constraint on the sup-norm distance between the joint pdf of $(X,R_x) \mid W$ and the product of the marginal distributions of $X \mid W$ and $R_x \mid W$. Although we do not pursue this here, this alternative characterization also suggests how to extend this concept to continuous treatments. Finally, we note that another equivalent characterization of conditional $c$-dependence obtains by replacing $X=1$ with $X=0$ in equation \eqref{eq:c-indep1}.

Throughout the rest of the paper we impose conditional $c$-dependence between $X$ and the potential outcomes given covariates:

\begin{partialIndepAssump}\label{assn:cdep}
$X$ is conditionally $c$-dependent with $Y_1$ given $W$ and with $Y_0$ given $W$.
\end{partialIndepAssump}

\subsection*{Interpreting conditional $c$-dependence}

Interpreting the deviations from one's baseline assumption is an important part of any sensitivity analysis. In this subsection we give several suggestions for how to interpret our sensitivity parameter $c$ in practice. 

Our first suggestion, going back to the earliest sensitivity analysis of \cite{CornfieldEtAl1959}, and used more recently in \cite{Imbens2003}, \cite{AltonjiElderTaber2005,AltonjiElderTaber2008}, and \cite{Oster2016}, is to use the amount of selection on observables to calibrate our beliefs about the amount of selection on unobservables. To formalize this idea in our context, recall that conditional $c$-dependence is defined using a distance between two conditional treatment probabilities: the usual propensity score $\Prob(X=1 \mid W=w)$ and that same conditional probability, except also conditional on an unobserved potential outcome $Y_x$. Hence the question is: How much does adding this extra conditioning variable affect the conditional treatment probability?

In the data, $Y_x$ is unobserved, so we cannot answer this question directly. But we can examine the impact of adding additional observed covariates on conditional treatment probabilities (assuming $K \equiv \dim(W) \geq 1$). Specifically, suppose we partition our vector of covariates $W$ into $(W_{-k},W_k)$ where $W_k$ is the $k$th component of $W$ and $W_{-k}$ is a vector of the remaining $K-$1 components. Define
\[
	\overline{c}_k = \sup_{w_{-k}} \, \sup_{w_k} | \Prob(X=1 \mid W = (w_{-k},w_k) ) - \Prob(X=1 \mid W_{-k}=w_{-k}) |
\]
where we take suprema over $w_k \in \supp(W_k \mid W_{-k} = w_{-k})$ and $w_{-k} \in \supp(W_{-k})$. $\overline{c}_k$ tells us the largest amount that the observed conditional treatment probabilities with and without the variable $W_k$ can differ.\footnote{If $K=1$ then one can compare $\Prob(X=1 \mid W=w)$ with $\Prob(X=1)$.} Less formally, it is a measure of the marginal impact of including the $k$th variable on treatment assignment, given that we have already included the vector $W_{-k}$. Similarly to \cite{CornfieldEtAl1959} and the subsequent literature, the idea is that if adding an extra observed variable creates variation $\overline{c}_k$, then it might be reasonable to expect that adding the unobserved variable $Y_x$ to our conditioning set may also create variation $\overline{c}_k$. In practice, one can compute and examine $\overline{c}_k$ for each $k$.

Our next suggestion is a variation which incorporates information on the distribution of $W$. Define
\[
	p_{1 \mid W}(w_{-k},w_k) = \Prob(X=1 \mid W=(w_{-k},w_k) )
\]
and
\[
	p_{1 \mid W_{-k}}(w_{-k}) = \Prob(X=1 \mid W_{-k}=w_{-k}).
\]
Rather than examining the largest point in the support of the random variable
\[
	| p_{1 \mid W}(W_{-k},W_k) - p_{1 \mid W_{-k}}(W_{-k}) |
\]
we could also consider quantiles of this distribution, such as the 50th, 75th, or 90th percentiles. One could also plot the distribution of this random variable for each $k$. 

These suggestions are a kind of nonparametric version of the implicit partial $R^2$'s used by \cite{Imbens2003} in his parametric model, or of the logit coefficients used by \cite{RosenbaumRubin1983sensitivity}. The overall idea is the same: We are trying to measure the partial effect of adding an extra conditioning covariate on the conditional probability of treatment.

Our final suggestion reiterates a point made by Rosenbaum (\citeyear{Rosenbaum2002rejoinder}, section 7): Precise quantitative interpretations of sensitivity parameters like $c$ are not always necessary. We can perform qualitative comparisons of robustness across different studies and datasets by comparing the corresponding bound functions, as we do in our numerical illustration on page \pageref{sec:numericalIllustration}. \cite{Imbens2003} made a similar point, stating that ``not \ldots\ all evaluations are equally sensitive to departures from the exogeneity assumption'' (page 126). Such rankings of studies in terms of their robustness may help one aggregate findings across different studies. We leave a formal study of this kind of robustness-adjusted meta-analysis to future work.

\section{Identification under conditional $c$-dependence}\label{sec:identification}

In this section, we study identification of treatment effects under conditional $c$-dependence. To do so, we start by deriving bounds on cdfs under generic $c$-dependence. We then apply these results to obtain sharp bounds on various treatment effect functionals.

\subsection{Partial identification of cdfs}\label{section:theSetF}

In this subsection we consider the relationship between a generic scalar random variable $U$ and a binary variable $X \in \{ 0,1 \}$. We derive sharp bounds on the conditional cdf of $U$ given $X$ when (1) the marginal distributions of $U$ and $X$ are known and (2) $X$ is $c$-dependent with $U$, meaning that
\begin{equation}\label{eq:c-indep2}
	\sup_{u \in \supp(U)} | \Prob(X=1 \mid U=u) - \Prob(X=1) | \leq c.
\end{equation}
In the next subsection we will condition on $W$ and apply this general result with $U = R_x$, the conditional rank variable to obtain sharp bounds on various treatment effect parameters.

Let $F_{U \mid X}(u \mid x) = \Prob(U\leq u \mid X=x)$ denote the unknown conditional cdf of $U$ given $X=x$. Let $F_U(u) = \Prob(U \leq u)$ denote the known marginal cdf of $U$. Let $p_x = \Prob(X=x)$ denote the known marginal probability mass function of $X$.

Define
\[
	\overline{F}_{U \mid X}^c(u \mid x) = \min\left\{F_U(u) + \frac{c}{p_x}\min\{F_U(u),1-F_U(u)\}, \, \frac{F_U(u)}{p_x}, \, 1\right\}
\]
and
\[
	\underline{F}_{U \mid X}^c(u \mid x) = \max\left\{F_U(u) - \frac{c}{p_x} \min\{F_U(u),1-F_U(u)\}, \, \frac{F_U(u)-1}{p_x} +1, \, 0 \right\}.
\]

\begin{theorem}\label{lemma:c-dep_cdf_bounds}
Suppose the following hold.
\begin{enumerate}
\item The marginal distributions of $U$ and $X$ are known. 

\item $U$ is continuously distributed.\label{thm1_A2}

\item Equation \eqref{eq:c-indep2} holds. 

\item $p_1 \in (0,1)$. 
\end{enumerate}
Let $\mathcal{F}_{\supp(U)}$ denote the set of all cdfs on $\supp(U)$. Then, for each $x \in \{ 0,1 \}$, $F_{U \mid X}(\cdot \mid x) \in \mathcal{F}_{U \mid x}^c$ where
\[ 
	\mathcal{F}_{U \mid x}^c =
	\left \{ F \in \mathcal{F}_{\supp(U)} : \underline{F}_{U \mid X}^c(u \mid x) \leq F(u) \leq \overline{F}_{U \mid X}^c(u \mid x) \text{ for all } u \in \supp(U) \right\}.
\]
Furthermore, for each $\epsilon \in [0,1]$ there exists a joint distribution of $(U,X)$ consistent with assumptions 1--4 above and such that
\begin{multline}\label{eq:intermediate R cdf bounds}
	\big( \Prob(U\leq u \mid X=1), \, \Prob(U\leq u \mid X=0) \big) \\
	= \left( \epsilon  \underline{F}_{U \mid X}^c(u \mid 1) + (1-\epsilon) \overline{F}_{U \mid X}^c(u \mid 1), \, 
		(1-\epsilon)  \underline{F}_{U \mid X}^c(u \mid 0) + \epsilon \overline{F}_{U \mid X}^c(u \mid 0) \right)
\end{multline}
for all $u \in \supp(U)$. Consequently, for any $x \in \{ 0,1 \}$ and $u \in \supp(U)$ the pointwise bounds
\[
	F_{U \mid X}(u \mid x) \in [\underline{F}_{U \mid X}^c(u \mid x), \overline{F}_{U \mid X}^c(u \mid x)]
\]
are sharp.
\end{theorem}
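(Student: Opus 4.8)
The plan is to reduce everything to the single scalar object $\pi(u) = \Prob(X=1 \mid U=u)$ and write $p_0 = 1-p_1$. Since $U$ is continuous with known cdf $F_U$, any joint law of $(U,X)$ consistent with the known marginals is pinned down by $\pi$, and two constraints bind: the $c$-dependence condition \eqref{eq:c-indep2}, which reads $\max\{0,p_1-c\} \leq \pi(u) \leq \min\{1,p_1+c\}$ for all $u$ (the $\min$ and $\max$ with $0,1$ coming from $\pi$ being a probability), together with the marginal-consistency condition $\int \pi\, dF_U = p_1$. The conditional cdfs are then $F_{U\mid X}(u\mid1) = \frac{1}{p_1}\int_{-\infty}^u \pi\, dF_U$ and $F_{U\mid X}(u\mid0) = \frac{1}{p_0}\big(F_U(u)-\int_{-\infty}^u \pi\, dF_U\big)$. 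For the membership claim I would bound the partial integral $A(u) = \int_{-\infty}^u \pi\, dF_U$: to maximize $A(u)$ one pushes $\pi$ to $\min\{1,p_1+c\}$ on $(-\infty,u]$ and, subject to $\int\pi\,dF_U = p_1$, to $\max\{0,p_1-c\}$ on $(u,\infty)$, giving $A(u) \leq \min\{\min\{1,p_1+c\}\,F_U(u),\ p_1-\max\{0,p_1-c\}(1-F_U(u))\}$. Dividing by $p_1$, capping at $1$ since the result is a cdf, and checking the regimes according to whether $c$ is below $p_1$, below $p_0$, or larger, reduces this expression exactly to the three-way $\min$ defining $\overline{F}^c_{U\mid X}(u\mid1)$; minimizing $A(u)$ gives $\underline{F}^c_{U\mid X}(u\mid1)$, and the $x=0$ bounds follow by the relabeling symmetry $X=1\leftrightarrow X=0$ (i.e.\ $\pi\mapsto 1-\pi$, $p_1\mapsto p_0$). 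This yields $F_{U\mid X}(\cdot\mid x)\in\mathcal F^c_{U\mid x}$.

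For sharpness I would exhibit two extreme laws and interpolate, writing $\overline F^c(u\mid x)$ and $\underline F^c(u\mid x)$ as shorthand for the bound functions. Because every candidate law shares the continuous marginal $F_U$, I can define $\pi$ through a Radon--Nikodym derivative with respect to $F_U$: set $\pi_0(u) = p_1\,\frac{d}{dF_U}\overline F^c(u\mid1)$ and $\pi_1(u) = p_1\,\frac{d}{dF_U}\underline F^c(u\mid1)$. Viewing each bound as a piecewise-linear function $g$ of $v=F_U(u)$, the values $\pi_j(u) = p_1\,g'(F_U(u))$ lie in $\{0,\,p_1-c,\,p_1+c,\,1\}\cap[0,1]$ within each regime, so each $\pi_j$ stays in $[\max\{0,p_1-c\},\min\{1,p_1+c\}]$ and satisfies \eqref{eq:c-indep2}; and $\int\pi_j\,dF_U = p_1[g(1)-g(0)] = p_1$ since the bound functions run from $0$ to $1$. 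Let $P_0,P_1$ denote the resulting joint laws.

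By construction $P_0$ attains $F_{U\mid X}(\cdot\mid1)=\overline F^c(\cdot\mid1)$, and because the algebraic identity $p_1\overline F^c(u\mid1)+p_0\underline F^c(u\mid0)=F_U(u)$ holds — which I would verify by matching the two three-term optima term by term — the law of total probability forces $P_0$ to attain $\underline F^c(\cdot\mid0)$ simultaneously; symmetrically $P_1$ attains $(\underline F^c(\cdot\mid1),\overline F^c(\cdot\mid0))$. For general $\epsilon$ set $P_\epsilon = (1-\epsilon)P_0 + \epsilon P_1$. Its $U$-marginal is $F_U$ and $\Prob(X=1)=p_1$, and its conditional probability is the convex combination $\pi_\epsilon = (1-\epsilon)\pi_0 + \epsilon\pi_1$, which again lies in $[\max\{0,p_1-c\},\min\{1,p_1+c\}]$ by convexity, so $P_\epsilon$ satisfies assumptions 1--4. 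Since the conditional cdfs are linear in the partial integral of $\pi$, this delivers exactly \eqref{eq:intermediate R cdf bounds}. Finally, as $\epsilon$ ranges over $[0,1]$ the value $\epsilon\underline F^c(u\mid x)+(1-\epsilon)\overline F^c(u\mid x)$ sweeps the whole interval $[\underline F^c(u\mid x),\overline F^c(u\mid x)]$, so every endpoint and intermediate value is attained, giving pointwise sharpness.

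I expect the main obstacle to be the case bookkeeping in the first part together with the identity $p_1\overline F^c(u\mid1)+p_0\underline F^c(u\mid0)=F_U(u)$: collapsing the two-term partial-integral optimum to the stated three-term $\min/\max$ in every regime of $c$ relative to $p_1$ and $p_0$, and confirming that the slopes of the bound functions remain in the admissible range so that the interpolating $\pi_\epsilon$ is always a legitimate $c$-dependent conditional probability. The conceptual steps (optimize the partial integral, build extremal laws via the density relative to $F_U$, mix) are clean; the risk lies entirely in the piecewise algebra.
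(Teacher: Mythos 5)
Your proposal is correct and takes essentially the same route as the paper: the membership part is the paper's argument of bounding the integrand $\Prob(X=x \mid U=u)$ in the integral representation $F_{U \mid X}(u \mid x) = \frac{1}{p_x}\int_{-\infty}^u \Prob(X=x \mid U=v)\, dF_U(v)$, and your extremal propensities $\pi_0,\pi_1$ (the $F_U$-derivatives of the bound functions) coincide exactly with the paper's two-valued step functions $\overline{P}_x^c$ and $\underline{P}_x^c$, with convex combinations of them handling intermediate $\epsilon$ just as in the paper. The only cosmetic differences are that you obtain the extremal laws as Radon--Nikodym derivatives of the bounds rather than writing them down and verifying, and that you state the coupling identity $p_1 \overline{F}_{U \mid X}^c(u \mid 1) + p_0 \underline{F}_{U \mid X}^c(u \mid 0) = F_U(u)$ explicitly (it follows term by term from $a - \min\{x,y,z\} = \max\{a-x,a-y,a-z\}$), whereas the paper leaves this law-of-total-probability coupling implicit in its construction.
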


The proof of theorem \ref{lemma:c-dep_cdf_bounds}, along with all other proofs, is given in the appendix. In this proof, we note that there are two constraints on the conditional distribution of $U \mid X$. The first is the $c$-dependence constraint. The second is the fact that the marginal distributions of $U$ and $X$ are known, and hence the conditional cdfs must satisfy a law of total probability constraint. This result is therefore a variation on the decomposition of mixtures problem. See \cite{CrossManski2002}, \cite{Manski2007} chapter 5, and \cite{MolinariPeski2006} for further discussion.

Theorem 1 has three conclusions. First, we show that the functions $\overline{F}_{U \mid X}^c(\cdot \mid x)$ and $\underline{F}_{U \mid X}^c(\cdot \mid x)$ bound the unknown conditional cdf $F_{U \mid X}(\cdot \mid x)$ uniformly in their arguments. Second, we show that these bounds are functionally sharp in the sense that the joint identified set for the two conditional cdfs $(F_{U \mid X}(\cdot \mid 1),F_{U \mid X}(\cdot \mid 0))$ contains linear combinations of the bound functions $\overline{F}_{U \mid X}^c(\cdot \mid x)$ and $\underline{F}_{U \mid X}^c(\cdot \mid x)$. Finally, we remark that this functional sharpness implies pointwise sharpness.

Importantly, the bound functions $\overline{F}_{U \mid X}^c(\cdot \mid x)$ and $\underline{F}_{U \mid X}^c(\cdot \mid x)$ are piecewise linear functions with simple analytical expressions. These bounds are proper cdfs and can be attained, as stated above. As $c$ approaches zero, these bounds for $F_{U \mid X}(u \mid x)$ collapse to the conditional cdf $F_U(u)$. When $c$ exceeds $\max\{p_{0},p_{1}\}$, the $c$-dependence constraint is not binding. Consequently, the cdf bounds simplify to
\[
	\overline{F}_{U \mid X}^c(u \mid x) = \min\left\{\frac{F_U(u)}{p_{x}}, \, 1 \right\}
	\qquad \text{and} \qquad
	\underline{F}_{U \mid X}^c(u \mid x) = \max\left\{\frac{F_U(u)-1}{p_{x}} + 1, \, 0\right\}.
\]
These bounds can be interpreted as the \emph{no assumptions bounds} since the only constraint imposed on the cdfs is that they satisfy the law of total probability. 

Figure \ref{cIndepBoundsOnCdfs} shows several examples of the bound functions $\overline{F}_{U \mid X}^c(\cdot \mid x)$ and $\underline{F}_{U \mid X}^c(\cdot \mid x)$. In this example we let $U \sim \text{Unif}[0,1]$ and $p_1 = 0.75$. We let $c= 0.1$, $0.4$, and $0.9$, which represent three qualitative regions for $c$: (a) $c < \min \{ p_1, p_0 \}$, where both bounds are strictly between 0 and 1 on the interior of the support, (b) $\min \{ p_1, p_0 \} \leq c < \max \{ p_1, p_0 \}$, where one bound is strictly between 0 and 1 on the interior of the support, but the other is not, and (c) $c \geq \max \{ p_1, p_0 \}$, where we simply obtain the no assumptions bounds.


\begin{figure}[t]
\centering
\includegraphics[width=50mm]{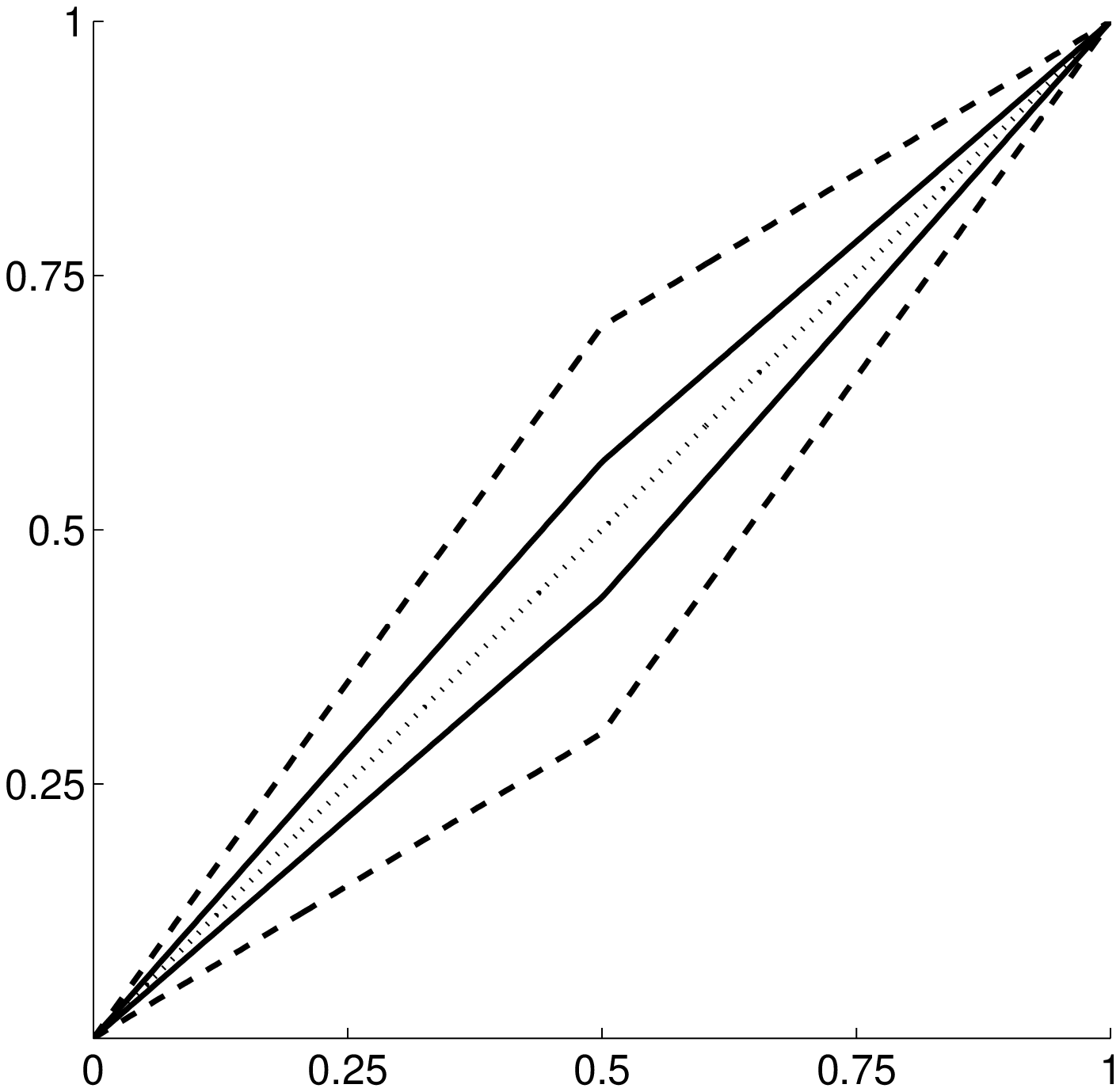}
\hspace{3mm}
\includegraphics[width=50mm]{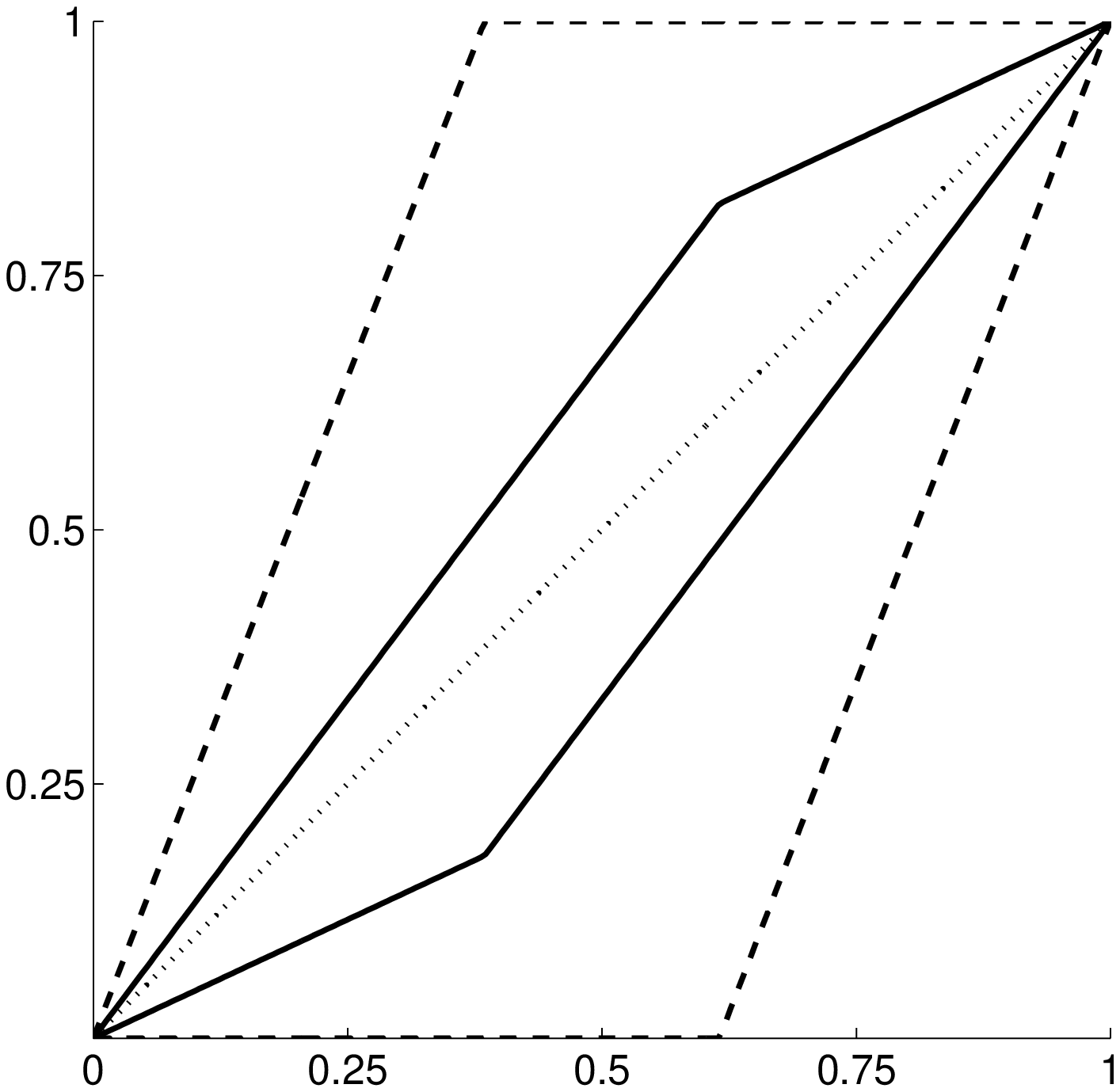}
\hspace{3mm}
\includegraphics[width=50mm]{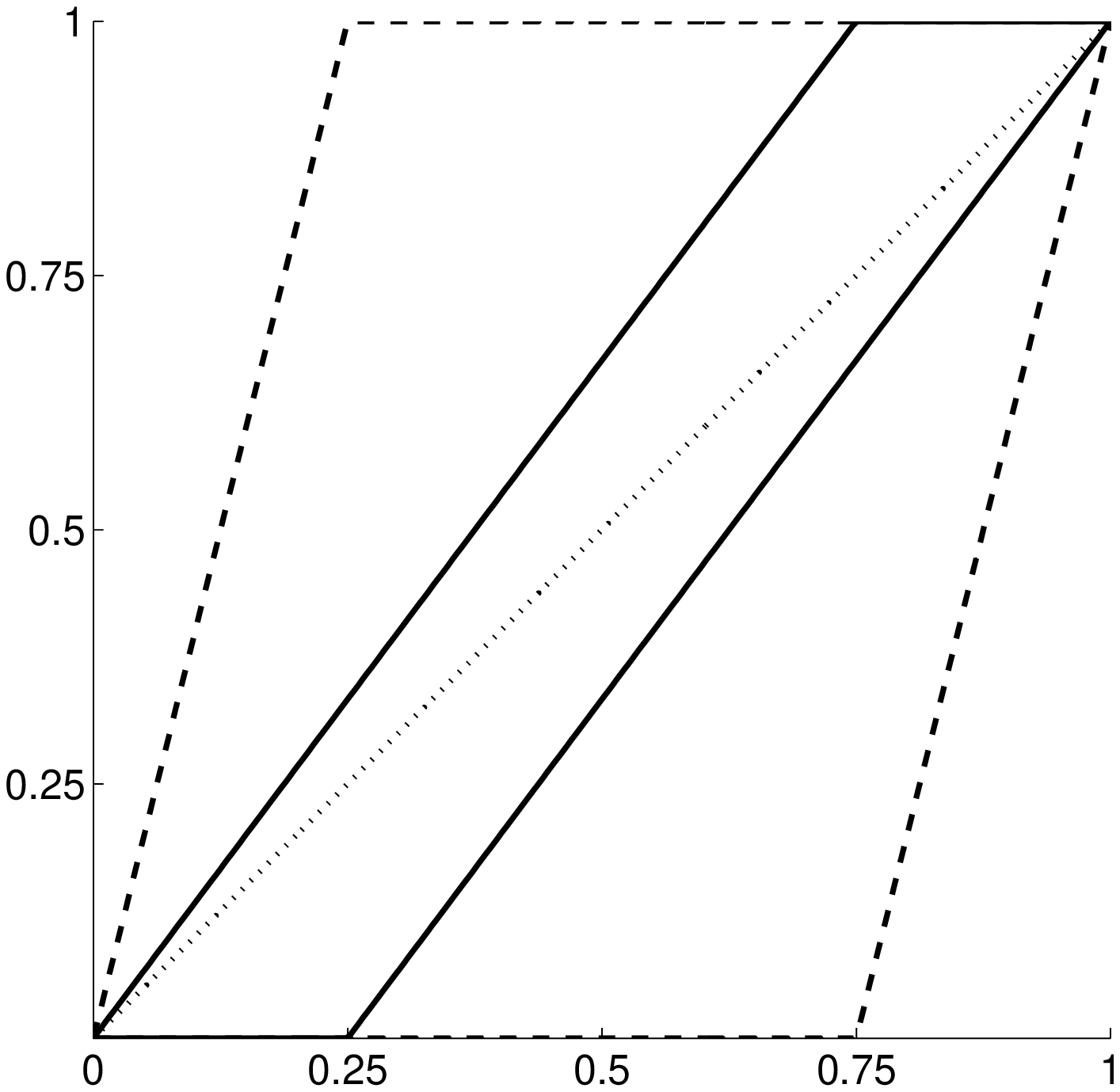}
\caption{Example upper and lower bounds on $F_{U \mid X}(u \mid x)$ for $x=1$ (solid) and $x=0$ (dashed), when $p_1 = 0.75$. Left: $c = 0.1 < \min \{ p_1, p_0 \}$. Middle: $\min \{ p_1, p_0 \} < c = 0.4 < \max \{ p_1, p_0 \}$. Right: $c = 0.9 > \max \{ p_1, p_0 \}$. The diagonal, representing the $c=0$ case of full independence, is plotted as a dotted line.}
\label{cIndepBoundsOnCdfs}
\end{figure}

\subsection{Partial identification of treatment effects}

Next we study identification of various treatment effects under conditional $c$-dependence. Throughout most of this section we focus on continuous outcomes. We study binary outcomes on page \pageref{sec:binaryOutcomes}. We end with a numerical illustration of the identified set for ATE as a function of $c$, and discuss how this set depends on features of the observed distribution of the data.

\subsubsection*{Conditional cdfs}

Under conditional independence, $c=0$, the marginal conditional distribution functions $F_{Y_0 \mid W}$ and $F_{Y_1 \mid W}$ are point identified. For $c > 0$, these functions are partially identified. In this case, we derive sharp bounds on these cdfs in proposition \ref{prop:condcdf_id} below.

Define
\begin{align}\label{eq:cdf upperbound}
	\overline{F}^c_{Y_x \mid W}(y \mid w)
	&= \min\left\{\frac{p_{x \mid w}F_{Y \mid X,W}(y \mid x,w)}{p_{x \mid w}-c}\indicator(p_{x \mid w} > c)+\indicator(p_{x \mid w} \leq c), \right.\\
	&\hspace{20mm} \left. \frac{p_{x \mid w}F_{Y \mid X,W}(y \mid x,w) + c}{p_{x \mid w} + c}, \, \, p_{x \mid w}F_{Y \mid X,W}(y \mid x,w) + (1-p_{x \mid w})\right\}\notag
\end{align}
and
\begin{align}\label{eq:cdf lowerbound}
	\underline{F}^c_{Y_x \mid W}(y \mid w)
	&=
	\max\left\{\frac{p_{x \mid w}F_{Y \mid X,W}(y \mid x,w)}{p_{x \mid w}+c}, \right.\\
	&\hspace{20mm} \left. \frac{p_{x \mid w}F_{Y \mid X,W}(y \mid x,w) - c}{p_{x \mid w} - c}\indicator(p_{x \mid w} > c), \, \, p_{x \mid w}F_{Y \mid X,W}(y \mid x,w)\right\}\notag
\end{align}
for all $y\in [\underline{y}_x(w),\overline{y}_x(w))$. For $y < \underline{y}_x(w)$, define these cdf bounds to be zero. For $y \geq \overline{y}_x(w)$, define these cdf bounds to be one. 

The following proposition shows that the functions \eqref{eq:cdf upperbound} and \eqref{eq:cdf lowerbound} are sharp bounds on the cdf of $Y_x \mid W=w$ under conditional $c$-dependence.

\begin{proposition}[Bounds on conditional cdfs]\label{prop:condcdf_id}
Let $w\in\supp(W)$. Suppose the joint distribution of $(Y,X,W)$ is known. Let A\ref{assn:continuity} and A\ref{assn:cdep} hold. Let $\mathcal{F}_\R$ denote the set of all cdfs on $\R$.
\begin{enumerate}
\item Then $F_{Y_x \mid W}(\cdot \mid w) \in \mathcal{F}_{Y_x \mid w}^c$ where
\[ 
	\mathcal{F}_{Y_x \mid w}^c =
	\left \{ F \in \mathcal{F}_\R : \underline{F}_{Y_x \mid W}^c(y \mid w) \leq F(y) \leq \overline{F}_{Y_x \mid W}^c(y \mid w)\text{ for all } y \in \R \right\}.
\]

\item Furthermore, for each $\epsilon \in [0,1]$ and $0 < \eta < \min \{ p_{x \mid w}, 1 - p_{x \mid w} \}$, there exist cdfs $F_{Y_x \mid W}^c(\cdot \mid w; \epsilon,\eta) \in \mathcal{F}_{Y_x \mid w}^c$ for $x \in \{0,1\}$ such that
\begin{enumerate}

\item There exists a joint distribution of $(Y_1,Y_0,X) \mid W=w$ consistent with our maintained assumptions and such that
\[
	\Prob(Y_1 \leq y \mid W=w) = F_{Y_1 \mid W}^c(y \mid w; \epsilon,\eta)
	\quad \text{and} \quad
	\Prob(Y_0 \leq y \mid W=w) = F_{Y_0 \mid W}^c(y \mid w; 1-\epsilon,\eta)
\]
for all $y \in \supp(Y_x \mid W=w)$.

\item For each $x \in \{0,1\}$, as $\eta \searrow 0$, $F_{Y_x \mid W}^c(\cdot \mid w;1,\eta)$ and $F_{Y_x \mid W}^c(\cdot \mid w; 0,\eta)$ converge pointwise monotonically to $\overline{F}_{Y_x \mid W}^c(\cdot \mid w)$ and $\underline{F}_{Y_x \mid W}^c(\cdot \mid w)$, respectively.

\item For each $y \in \R$ and $w \in \supp(W)$, the function $F_{Y_x \mid W}^c(y \mid w; \cdot,\eta)$ is continuous on $[0,1]$. 
\end{enumerate}

\item Consequently, for any $y \in \R$, the pointwise bounds
\begin{align}\label{eq:condcdfinterval}
F_{Y_x \mid W}(y \mid w) &\in [\underline{F}_{Y_x \mid W}^c(y \mid w),\overline{F}_{Y_x \mid W}^c(y \mid w)]
\end{align}
have a sharp interior.
\end{enumerate}
\end{proposition}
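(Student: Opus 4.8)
The plan is to derive every part of the proposition from Theorem \ref{lemma:c-dep_cdf_bounds} by applying it conditionally on $W=w$ with the generic variable $U$ taken to be the conditional rank $R_x$. First I would verify the four hypotheses of that theorem in this conditional problem: the marginal law of $R_x \mid W=w$ is $\text{Unif}[0,1]$ by construction and hence known; the marginal of $X \mid W=w$ is known through $p_{x\mid w}$; $R_x$ is continuously distributed; $p_{1\mid w}\in(0,1)$ by A\ref{assn:continuity}.\ref{A1_3}; and the $c$-dependence requirement \eqref{eq:c-indep2} is exactly the rank reformulation (\ref{eq:c-indep1}$^\prime$) of A\ref{assn:cdep}. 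Theorem \ref{lemma:c-dep_cdf_bounds} then bounds $F_{R_x\mid X,W}(\cdot\mid x,w)$ between the uniform-margin versions of $\underline{F}_{U\mid X}^c$ and $\overline{F}_{U\mid X}^c$, i.e. with $F_U(r)=r$ and $p_x=p_{x\mid w}$.

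To obtain part 1, I would pass from the rank cdf to the outcome cdf. Writing $G=F_{Y_x\mid W}(\cdot\mid w)$, strict monotonicity and continuity of $G$ (A\ref{assn:continuity}.\ref{A1_1}) give the change-of-variables identity $F_{R_x\mid X,W}(G(y)\mid x,w)=F_{Y\mid X,W}(y\mid x,w)$. Evaluating the Theorem \ref{lemma:c-dep_cdf_bounds} bounds at $r=G(y)$ yields $\underline{F}^c(G(y))\le F_{Y\mid X,W}(y\mid x,w)\le\overline{F}^c(G(y))$; since the uniform bound functions are nondecreasing I can invert them to bound $G(y)$ itself, with the upper bound on $G(y)$ equal to the inverse of the lower bound function and the lower bound equal to the inverse of the upper bound function. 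A direct computation, checking the piecewise-linear branches and the cases $p_{x\mid w}>c$ and $p_{x\mid w}\le c$ (which produce the indicator terms), shows these inverses coincide with \eqref{eq:cdf upperbound} and \eqref{eq:cdf lowerbound}. The endpoint conventions for $y<\underline{y}_x(w)$ and $y\ge\overline{y}_x(w)$ are immediate.

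For the sharpness claim in part 2, I would lift the extremal construction \eqref{eq:intermediate R cdf bounds} of Theorem \ref{lemma:c-dep_cdf_bounds}. Applying that construction to $(R_1,X)$ with mixing weight $\epsilon$ and to $(R_0,X)$ with weight $1-\epsilon$ produces conditional laws $R_x\mid X,W=w$ sharing the common margin of $X$ given by $p_{\cdot\mid w}$; coupling $R_1$ and $R_0$ conditionally on $X$ (say, independently) gives a single joint law of $(R_1,R_0,X)\mid W=w$, from which $Y_x=G_x^{-1}(R_x)$ recovers potential outcomes whose marginals are the claimed $F_{Y_x\mid W}^c(\cdot\mid w;\epsilon,\eta)$ and $F_{Y_x\mid W}^c(\cdot\mid w;1-\epsilon,\eta)$. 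Because $\Prob(X=1\mid Y_x,W=w)$ depends only on the $(R_x,X)$ margin, $c$-dependence for each potential outcome is inherited directly from Theorem \ref{lemma:c-dep_cdf_bounds}, and data consistency $\Prob(Y_x\le y\mid X=x,w)=F_{Y\mid X,W}(y\mid x,w)$ holds by construction. Parts 2(b) and 2(c) then follow from the monotone convergence and the continuity of the mixed bound functions in $(\epsilon,\eta)$, and part 3 is an immediate consequence: every value strictly inside $[\underline{F}^c_{Y_x\mid W}(y\mid w),\overline{F}^c_{Y_x\mid W}(y\mid w)]$ is attained by a suitable $\epsilon$.

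I expect the main obstacle to be the regularization parameter $\eta$. The extremal cdfs delivered by Theorem \ref{lemma:c-dep_cdf_bounds} need not be strictly increasing nor have the full support $[\underline{y}_x(w),\overline{y}_x(w)]$ demanded by A\ref{assn:continuity}, so the bounds themselves may be unattainable while maintaining the maintained assumptions; I would therefore perturb the counterfactual conditional distributions by a vanishing amount $\eta$ to keep every conditional cdf strictly increasing and continuous, verify that this perturbation preserves both $c$-dependence and data consistency, and show it collapses to the bound functions as $\eta\searrow0$. Arranging this perturbation to simultaneously respect all parts of A\ref{assn:continuity} for \emph{both} potential outcomes, while keeping the dependence on $\epsilon$ continuous, is the delicate part; the translation and inversion steps are otherwise mechanical.
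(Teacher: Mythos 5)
Your proposal is correct and follows essentially the same route as the paper's own proof: apply Theorem \ref{lemma:c-dep_cdf_bounds} conditionally on $W=w$ with $U = R_x$, translate the rank-cdf bounds into the outcome-cdf bounds \eqref{eq:cdf upperbound}--\eqref{eq:cdf lowerbound} via generalized inverses, and prove sharpness by lifting the extremal propensity-score construction with an $\eta$-truncation that keeps the constructed conditional cdfs strictly increasing (so A\ref{assn:continuity}.\ref{A1_1} is preserved), followed by monotone approximation as $\eta \searrow 0$, continuity in $\epsilon$, and the intermediate value theorem. The obstacle you flag---that the unperturbed extremal cdfs can violate A\ref{assn:continuity}.\ref{A1_1} and therefore require the $\eta$-regularization, with the copula between $R_1$ and $R_0$ left unconstrained---is exactly the device the paper uses.
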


The proof of this result relies importantly on our general result, theorem \ref{lemma:c-dep_cdf_bounds}. Similarly to that result, proposition \ref{prop:condcdf_id} has three conclusions. First we show that the functions \eqref{eq:cdf upperbound} and \eqref{eq:cdf lowerbound} bound $F_{Y_x \mid W}(\cdot \mid w)$ uniformly in their arguments. Second, we show that these bounds are functionally sharp. As in theorem \ref{lemma:c-dep_cdf_bounds}, sharpness is subtle because $\mathcal{F}_{Y_x \mid w}^c$ is not the identified set for $F_{Y_x \mid W}(\cdot \mid w)$---it contains some cdfs which cannot be attained. For example, it contains cdfs with jump discontinuities on their support, which violates A\ref{assn:continuity}.\ref{A1_1}. We could impose extra constraints to obtain the sharp set of cdfs but this is not required for our analysis. 

That said, the bound functions \eqref{eq:cdf upperbound} and \eqref{eq:cdf lowerbound} used to define $\mathcal{F}_{Y_x \mid W}^c$ \emph{are} sharp for the function $F_{Y_x \mid W}(\cdot \mid w)$ in the sense that there are cdfs $F_{Y_x \mid W}(\cdot \mid w; \epsilon, \eta)$ which are (a) attainable, (b) can be made arbitrarily close to the bound functions, and (c) continuously vary between the lower and upper bounds. The bound functions themselves are not always continuous and so violate A\ref{assn:continuity}.\ref{A1_1}. This explains the presence of the $\eta$ variable. It also explains why the endpoints of the bounds in our results below may not be attainable. If $c$ is small enough, the bound functions can be attainable, but we do not enumerate these cases for simplicity.

We use attainability of the functions $F_{Y_x \mid W}(\cdot \mid w; \epsilon, \eta)$ to prove sharpness of identified sets for various functionals of $F_{Y_1 \mid W}$ and $F_{Y_0 \mid W}$. For example, as the third conclusion to proposition \ref{prop:condcdf_id}, we already stated the pointwise-in-$y$ sharpness result for the evaluation functional. In general, we obtain bounds for functionals by evaluating the functional at the bounds \eqref{eq:cdf upperbound} and \eqref{eq:cdf lowerbound}. Sharpness of these bounds then follows by applying proposition \ref{prop:condcdf_id}.

\subsubsection*{Conditional QTEs, CATE, and ATE}

Next we derive identified sets for functionals of the marginal distribution of potential outcomes given covariates. We begin with the conditional quantile treatment effect:
\[
	\text{CQTE}(\tau \mid w) = Q_{Y_1 \mid W}(\tau \mid w) - Q_{Y_0 \mid W}(\tau \mid w).
\]
By integrating these bounds over $\tau$ from zero to one, we will obtain sharp bounds for the conditional average treatment effect: 
\[
	\text{CATE}(w) = \Exp(Y_1 \mid W=w) - \Exp(Y_0 \mid W=w).
\]
Finally, averaging these bounds over the marginal distribution of $W$ yields sharp bounds on ATE.

We first give closed form expressions for bounds on the quantile function of the potential outcome $Y_x$ given $W=w$. Define
\begin{equation}\label{eq:quantile upperbound}
	\overline{Q}^c_{Y_x \mid W}(\tau \mid w)
	= Q_{Y \mid X,W}\left(\min\left\{\tau + \frac{c}{p_{x \mid w}}\min\{\tau,1-\tau\},\frac{\tau}{p_{x \mid w}},1\right\}\mid x,w\right)
\end{equation}
and 
\begin{equation}\label{eq:quantile lowerbound}
	\underline{Q}^c_{Y_x \mid W}(\tau \mid w)
	=
	Q_{Y \mid X,W}\left(\max\left\{\tau - \frac{c}{p_{x \mid w}}\min\{\tau,1-\tau\},\frac{\tau-1}{p_{x \mid w}} +1,0\right\}\mid x,w\right).
\end{equation}
The following proposition and corollary formalize these results.

\begin{proposition}[Bounds on CQTE]\label{prop:CQTE_bounds}
Let $w \in \supp(W)$. Let A\ref{assn:continuity} and A\ref{assn:cdep} hold. Suppose the joint distribution of $(Y,X,W)$ is known. Let $\tau \in (0,1)$. Then $\text{CQTE}(\tau \mid w)$ lies in the set
\begin{align}\label{eq:CQTE_bounds}
	&\left[\underline{\text{CQTE}}^c(\tau \mid w), \overline{\text{CQTE}}^c(\tau \mid w)\right] \notag \\
	&\hspace{25mm} \equiv 
	\left[\underline{Q}^c_{Y_1 \mid W}(\tau \mid w) - \overline{Q}^c_{Y_0 \mid W}(\tau \mid w), \,\overline{Q}^c_{Y_1 \mid W}(\tau \mid w) - \underline{Q}^c_{Y_0 \mid W}(\tau \mid w)\right].
\end{align}
Moreover, the interior of this set is sharp.
\end{proposition}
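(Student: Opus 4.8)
The plan is to reduce this to Proposition \ref{prop:condcdf_id} by passing from the sharp cdf bounds to quantile bounds through inversion, and then differencing. Write $p = p_{x \mid w}$. The structural fact I use throughout is that under A\ref{assn:continuity}.\ref{A1_1} the conditional quantile function $Q_{Y \mid X,W}(\cdot \mid x,w)$ is the ordinary continuous, strictly increasing inverse of $F_{Y \mid X,W}(\cdot \mid x,w)$. I would first establish the containment claim, then sharpness of the interior.

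For containment, I would write the cdf bounds of Proposition \ref{prop:condcdf_id} as $\underline{F}^c_{Y_x \mid W}(y \mid w) = h_x(F_{Y \mid X,W}(y \mid x,w))$ and $\overline{F}^c_{Y_x \mid W}(y \mid w) = g_x(F_{Y \mid X,W}(y \mid x,w))$, where $h_x$ is a maximum and $g_x$ a minimum of increasing, continuous, piecewise-linear functions of the reference cdf value. Since the generalized inverse of a maximum (respectively minimum) of increasing functions is the minimum (respectively maximum) of the individual inverses, computing each linear term's inverse and composing with $Q_{Y \mid X,W}(\cdot \mid x,w)$ yields exactly $\overline{Q}^c_{Y_x \mid W}(\tau \mid w) = Q_{Y \mid X,W}(h_x^{-1}(\tau) \mid x,w)$ and $\underline{Q}^c_{Y_x \mid W}(\tau \mid w) = Q_{Y \mid X,W}(g_x^{-1}(\tau) \mid x,w)$; for instance, inverting the terms $pt/(p+c)$ and $(pt-c)/(p-c)$ of $h_x$ gives $\tau + (c/p)\tau$ and $\tau + (c/p)(1-\tau)$, whose minimum is $\tau + (c/p)\min\{\tau,1-\tau\}$, matching \eqref{eq:quantile upperbound}. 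Because the quantile functional reverses pointwise cdf dominance, the inclusion $\underline{F}^c_{Y_x \mid W}(\cdot \mid w) \leq F_{Y_x \mid W}(\cdot \mid w) \leq \overline{F}^c_{Y_x \mid W}(\cdot \mid w)$ from part 1 of Proposition \ref{prop:condcdf_id} gives $\underline{Q}^c_{Y_x \mid W}(\tau \mid w) \leq Q_{Y_x \mid W}(\tau \mid w) \leq \overline{Q}^c_{Y_x \mid W}(\tau \mid w)$. Differencing the $x=1$ and $x=0$ intervals places $\text{CQTE}(\tau \mid w)$ in \eqref{eq:CQTE_bounds}.

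For sharpness of the interior, I would use the attainable cdfs $F^c_{Y_x \mid W}(\cdot \mid w; \epsilon,\eta)$ supplied by part 2 of Proposition \ref{prop:condcdf_id}, and define the attainable conditional quantile difference
\[
	D(\epsilon,\eta) = Q\big[F^c_{Y_1 \mid W}(\cdot \mid w; \epsilon,\eta)\big](\tau) - Q\big[F^c_{Y_0 \mid W}(\cdot \mid w; 1-\epsilon,\eta)\big](\tau).
\]
By part 2(a) each pair $(\epsilon,\eta)$ is realized by a joint distribution of $(Y_1,Y_0,X) \mid W=w$ consistent with the maintained assumptions, so every $D(\epsilon,\eta)$ is an attainable value of $\text{CQTE}(\tau \mid w)$. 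Part 2(c), combined with the fact that these attainable cdfs are continuous and strictly increasing in $y$ (so invertible), transfers continuity in $\epsilon$ of the cdf into continuity of $\epsilon \mapsto D(\epsilon,\eta)$ on $[0,1]$ for fixed $\eta$. Part 2(b) gives, as $\eta \searrow 0$, $D(1,\eta) \to \underline{Q}^c_{Y_1 \mid W}(\tau \mid w) - \overline{Q}^c_{Y_0 \mid W}(\tau \mid w) = \underline{\text{CQTE}}^c(\tau \mid w)$ and $D(0,\eta) \to \overline{\text{CQTE}}^c(\tau \mid w)$. Fixing any $d$ in the open interior of \eqref{eq:CQTE_bounds} and choosing $\eta$ small enough that $d$ lies strictly between $D(1,\eta)$ and $D(0,\eta)$, the intermediate value theorem produces $\epsilon \in (0,1)$ with $D(\epsilon,\eta) = d$, so $d$ is attained.

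I expect the main obstacle to be the limiting input behind part 2(b): showing that the quantiles of the attainable cdfs converge to the quantiles of the bound functions as $\eta \searrow 0$. The bound functions \eqref{eq:cdf upperbound}--\eqref{eq:cdf lowerbound} are only piecewise linear and may possess flat pieces or kinks, so the identity ``quantile of the limit equals limit of the quantiles'' must be justified carefully at the fixed interior level $\tau$ rather than invoked generically. This is precisely why the conclusion asserts sharpness of the interior and not attainment of the endpoints: at an endpoint the relevant quantile of a bound function need not be realizable by any admissible continuous, strictly increasing cdf, whereas for interior targets the intermediate value argument above succeeds.
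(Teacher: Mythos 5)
Your proposal is correct, and its sharpness half is essentially the paper's own argument: the same attainable distributions from part 2 of proposition \ref{prop:condcdf_id} with the same $\epsilon \leftrightarrow 1-\epsilon$ coupling across $Y_1$ and $Y_0$, the same monotone $\eta \searrow 0$ approximation, and the same intermediate value theorem step in $\epsilon$. Where you differ is the containment half. You invert the cdf bounds of proposition \ref{prop:condcdf_id} to recover \eqref{eq:quantile upperbound}--\eqref{eq:quantile lowerbound}, which forces you to handle generalized inverses of piecewise-linear maps that can have flat pieces and (when $p_{x \mid w} \leq c$) jumps at the support endpoints. The paper goes the other way: it never inverts a bound function. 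It starts from the identity $Q_{Y_x \mid W}(\tau \mid w) = Q_{Y \mid X,W}(F_{R_x \mid X,W}(\tau \mid x,w) \mid x,w)$, applies theorem \ref{lemma:c-dep_cdf_bounds} to bound the \emph{inner argument} $F_{R_x \mid X,W}(\tau \mid x,w)$, and uses monotonicity of $Q_{Y \mid X,W}(\cdot \mid x,w)$ -- the paper explicitly notes this is why proposition \ref{prop:CQTE_bounds} is \emph{easier} to prove than proposition \ref{prop:condcdf_id}. This same formulation also dissolves the obstacle you flag at the end. Because the attainable cdfs are $F_{Y_x \mid W}^c(\cdot \mid w;\epsilon,\eta) = G^{-1}(F_{Y \mid X,W}(\cdot \mid x,w);\epsilon,\eta)$ with $G(\cdot;\epsilon,\eta)$ continuous and strictly increasing, your $D(\epsilon,\eta)$ equals exactly the paper's expression \eqref{eq:CQTEsharpnessEq}, i.e.\ $Q_{Y \mid X,W}(G(\tau;\epsilon,\eta) \mid 1,w) - Q_{Y \mid X,W}(G(\tau;1-\epsilon,\eta) \mid 0,w)$ with the appropriate $G$'s. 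So the $\eta \searrow 0$ limit is not a ``quantile of the limit equals limit of the quantiles'' question at all: it is pointwise convergence of the scalar $G(\tau;\epsilon,\eta)$ at the fixed $\tau$ (immediate from the explicit formulas in theorem \ref{lemma:c-dep_cdf_bounds}'s construction), composed with the continuous function $Q_{Y \mid X,W}(\cdot \mid x,w)$. Staying in ``quantile space'' throughout, as the paper does, is what buys this simplification; your route is valid but pays an inversion cost twice, once for containment and once (in your worry, unnecessarily) for the limit.
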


The bounds \eqref{eq:CQTE_bounds} are also sharp for the function $\text{CQTE}(\cdot \mid \cdot)$ in a sense similar to that used in theorem \ref{lemma:c-dep_cdf_bounds} and proposition \ref{prop:condcdf_id}; we omit the formal statement for brevity. This functional sharpness delivers the following result.

\begin{corollary}[Bounds on CATE and ATE]\label{corollary:CATE_ATE_bounds}
Suppose the assumptions of proposition \ref{prop:CQTE_bounds} hold. Suppose $\Exp ( |Y| \mid X=x,W =w ) < \infty$ for all $(x,w) \in \supp(X,W)$.
\begin{enumerate}
\item Then $\text{CATE}(w)$ lies in the set
\[
	\left[\underline{\text{CATE}}^c(w),\overline{\text{CATE}}^c(w) \right]
	\equiv
	\left[\int_0^1 \underline{\text{CQTE}}^c(\tau \mid w) \; d\tau, \int_0^1 \overline{\text{CQTE}}^c(\tau \mid w) \; d\tau \right].
\]

\item Suppose further that $\Exp[ \Exp(|Y| \mid X=x,W) ] < \infty$ for $x \in \{0,1 \}$. Then $\text{ATE}$ lies in the set
\[
	\left[\underline{\text{ATE}}^c,\overline{\text{ATE}}^c \right]
	\equiv \Big[ \Exp \big( \underline{\text{CATE}}^c(W) \big), \, \Exp \big( \overline{\text{CATE}}^c(W) \big) \Big]
\]
assuming these means exist (including possibly $\pm \infty$).
\end{enumerate}
Moreover, the interiors of these sets are sharp.
\end{corollary}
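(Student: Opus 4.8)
The plan is to deduce both containment and interior sharpness by coupling the quantile representation of a conditional mean with the pointwise and functional sharpness already in hand from Propositions \ref{prop:condcdf_id} and \ref{prop:CQTE_bounds}. The organizing identity is $\Exp(Y_x \mid W=w) = \int_0^1 Q_{Y_x \mid W}(\tau \mid w)\, d\tau$, which holds whenever the conditional mean exists, with the integral equal to $\pm\infty$ exactly when the mean is. The hypothesis $\Exp(|Y| \mid X=x, W=w) < \infty$ ensures the observed conditional quantile function $Q_{Y \mid X,W}(\cdot \mid x,w)$ is integrable on $(0,1)$, which is what makes this identity and the manipulations below well posed.

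Containment of $\text{CATE}(w)$ is then immediate. Proposition \ref{prop:CQTE_bounds}, through the quantile bounds of Proposition \ref{prop:condcdf_id}, supplies the pointwise-in-$\tau$ inclusions $\underline{Q}^c_{Y_x \mid W}(\tau \mid w) \leq Q_{Y_x \mid W}(\tau \mid w) \leq \overline{Q}^c_{Y_x \mid W}(\tau \mid w)$. Integrating over $\tau$ and using the representation above gives $\int_0^1 \underline{Q}^c_{Y_x \mid W}(\tau \mid w)\,d\tau \leq \Exp(Y_x \mid W=w) \leq \int_0^1 \overline{Q}^c_{Y_x \mid W}(\tau \mid w)\,d\tau$. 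Subtracting the $Y_0$ inequalities from the $Y_1$ inequalities and recalling $\underline{\text{CQTE}}^c(\tau \mid w) = \underline{Q}^c_{Y_1 \mid W}(\tau \mid w) - \overline{Q}^c_{Y_0 \mid W}(\tau \mid w)$ places $\text{CATE}(w)$ in the claimed interval. For ATE, I would average the pointwise-in-$w$ CATE inclusion over the marginal distribution of $W$, invoking $\text{ATE} = \Exp[\text{CATE}(W)]$ and the second finiteness hypothesis to make the outer expectation meaningful.

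The substantive part is interior sharpness, for which I would use the functionally sharp attainable cdfs $F^c_{Y_x \mid W}(\cdot \mid w; \epsilon, \eta)$ from Proposition \ref{prop:condcdf_id}: these arise from a genuine joint law satisfying all maintained assumptions, converge to the bound functions as $\eta \searrow 0$, and vary continuously in $\epsilon$. For fixed small $\eta$, let $m_\eta(\epsilon)$ denote the value of CATE attained by the $\epsilon$-indexed distribution, i.e.\ the difference of $\int_0^1$ of the quantile functions associated with $F^c_{Y_1 \mid W}(\cdot \mid w;\epsilon,\eta)$ and $F^c_{Y_0 \mid W}(\cdot \mid w;1-\epsilon,\eta)$. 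By part 2(b) of Proposition \ref{prop:condcdf_id} (an upper cdf corresponds to a lower quantile and hence a smaller mean, and conversely), $m_\eta(1) \to \underline{\text{CATE}}^c(w)$ and $m_\eta(0) \to \overline{\text{CATE}}^c(w)$ as $\eta \searrow 0$. The key step is to show $\epsilon \mapsto m_\eta(\epsilon)$ is continuous, so the intermediate value theorem makes every value strictly between its endpoints attainable; letting $\eta \searrow 0$ then exhausts the open interior $(\underline{\text{CATE}}^c(w), \overline{\text{CATE}}^c(w))$. For ATE I would run the same $\epsilon$-homotopy with a single $\epsilon$ and $\eta$ used simultaneously across all $w$, so that $\epsilon \mapsto \Exp[m_\eta(\epsilon; W)]$ is continuous and sweeps from near $\underline{\text{ATE}}^c$ to near $\overline{\text{ATE}}^c$, and again apply the intermediate value theorem. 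This is also why the statement claims only that the \emph{interiors} are sharp: the bound functions themselves can violate A\ref{assn:continuity}.\ref{A1_1}, so the endpoints need not be attainable.

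The main obstacle I anticipate is justifying continuity of the attained mean in $\epsilon$ and the interchange of limit and integral as $\eta \searrow 0$ in $\int_0^1 Q^c(\tau \mid w;\epsilon,\eta)\,d\tau$. Pointwise continuity of the cdf in $\epsilon$ (part 2(c) of Proposition \ref{prop:condcdf_id}) yields continuity of the quantile function at its continuity points, but transferring this to the integral needs a domination argument; the natural dominating functions are the bound quantiles $\underline{Q}^c$ and $\overline{Q}^c$, which works directly when these are integrable and requires separate treatment in precisely the cases where $\int_0^1 \overline{Q}^c\,d\tau = +\infty$ or $\int_0^1 \underline{Q}^c\,d\tau = -\infty$, i.e.\ when the corollary's bounds are infinite. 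A secondary obstacle is the joint-in-$w$ measurability underlying the ATE construction, which I expect to follow from selecting the Proposition \ref{prop:condcdf_id} construction to depend measurably on $w$, but which should be verified explicitly.
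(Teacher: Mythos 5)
Your treatment of containment and of CATE interior sharpness is essentially the paper's own proof: integrate the CQTE bounds over $\tau$, then run an intermediate-value-theorem homotopy in $\epsilon$ at fixed $\eta$, followed by monotone approximation as $\eta \searrow 0$. On the one obstacle you flag there---domination for continuity in $\epsilon$---the paper's resolution is better than the one you propose: rather than dominating by $\underline{Q}^c$ and $\overline{Q}^c$ (which, as you yourself note, fails exactly when the corollary's bounds are infinite, e.g.\ $c \geq p_{x \mid w}$ with unbounded outcome support), it dominates by the two fixed-$\eta$ endpoint integrands $|Q_{Y \mid X,W}(\widetilde{F}_{R_x \mid X,W}^c(\tau \mid x,w;1,\eta) \mid x,w)| + |Q_{Y \mid X,W}(\widetilde{F}_{R_x \mid X,W}^c(\tau \mid x,w;0,\eta) \mid x,w)|$, which sandwich every $\epsilon$ by monotonicity of $Q_{Y \mid X,W}(\cdot \mid x,w)$, and whose integrals a change of variables shows to be at most $(A^{-1}+B^{-1})\Exp(|Y| \mid X=x,W=w) < \infty$, with $A,B > 0$ precisely because $\eta > 0$. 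So no ``separate treatment'' of the infinite-bound cases is needed: infinite bounds enter only through the final monotone convergence step, which tolerates them. This is worth internalizing---the $\eta$-perturbation exists in large part to make exactly this domination available.

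Your ATE sharpness argument, by contrast, genuinely departs from the paper, and as stated it has a gap. You want a single $(\epsilon,\eta)$ used ``simultaneously across all $w$,'' but the construction in proposition \ref{prop:condcdf_id} requires $\eta < \min\{p_{x \mid w}, 1-p_{x \mid w}\}$, and assumption A\ref{assn:continuity}.\ref{A1_3} imposes overlap only pointwise in $w$: when $\supp(W)$ is infinite, the propensity score need not be bounded away from $0$ and $1$, so no admissible common $\eta > 0$ need exist, and your homotopy is well defined as written only for finitely supported $W$. The repair is to let $\eta$ vary with $w$ (say $\eta(w) = \eta_0 \min\{p_{1 \mid w}, p_{0 \mid w}\}$ with the scalar $\eta_0$ as homotopy parameter) and then re-establish a $w$-uniform domination so that $\epsilon \mapsto \Exp[m_{\eta(\cdot)}(\epsilon;W)]$ is continuous, using the part-2 hypothesis $\Exp[\Exp(|Y| \mid X=x,W)] < \infty$; this requires checking that the constants $A(w)^{-1} + B(w)^{-1}$ remain bounded under that scaling, which is real additional work. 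The paper avoids all of this with a different device: it starts from the attainable conditional-independence function $C(w) = \Exp(Y \mid X=1,W=w) - \Exp(Y \mid X=0,W=w)$, forms the truncated shifts $f_M(w) = \min\{\overline{\text{CATE}}^c(w), C(w)+M\}$, shows $\Exp[f_M(W)] \nearrow \overline{\text{ATE}}^c$ by monotone convergence, and runs the intermediate value theorem in $M$ rather than in $\epsilon$, so no uniform-in-$w$ construction parameter ever appears. What your route buys in exchange is that your attained CATE functions are strictly interior and attainable at every $w$ by construction, whereas the paper's $f_M$ can touch the (possibly unattainable) boundary $\overline{\text{CATE}}^c(w)$ and thus implicitly leans on the same strictly-interior approximation your homotopy makes explicit. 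With the $w$-dependent $\eta$ repair, your argument closes.
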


All of these bounds are defined directly from equations \eqref{eq:quantile upperbound} and \eqref{eq:quantile lowerbound}, or averages of those equations. Those equations have simple analytical expressions, which makes all of these bounds quite tractable. These bounds are all monotonic in $c$, as illustrated in figure \ref{numericalIllustration_identifiedSets} of our numerical example. In particular, as $c$ goes to zero, the CQTE bounds collapse to the point $Q_{Y \mid X,W}(\tau \mid 1,w) - Q_{Y \mid X,W}(\tau \mid 0,w)$ while the $\text{CATE}(w)$ bounds collapse to the point $\Exp(Y \mid X=1, W=w) - \Exp(Y \mid X=0,W=w)$ and the ATE bounds collapse to $\Exp[ \Exp(Y \mid X=1, W) - \Exp(Y \mid X=0,W) ]$.

\subsubsection*{Unconditional cdfs and the unconditional QTE}

We can also derive bounds on the unconditional $\text{QTE}(\tau)$ by first deriving bounds on the unconditional cdfs of $Y_1$ and $Y_0$ and inverting them. These unconditional cdf bounds obtain by integrating the conditional bounds of proposition \ref{prop:condcdf_id} over $w$. Inverses here denote the left inverse.

\begin{corollary}[Bounds on marginal cdfs, quantiles, and QTEs]\label{lemma:margCDF_QTE_bounds}
Let A\ref{assn:continuity} and A\ref{assn:cdep} hold. Suppose the joint distribution of $(Y,X,W)$ is known. Let $y\in\R$ and $\tau \in (0,1)$. Then the following hold.
\begin{enumerate}
\item $F_{Y_x}(y)$ lies in the set
\[
	\left[\underline{F}^c_{Y_x}(y),\overline{F}^c_{Y_x}(y)\right]
	\equiv
	\left[ \Exp \left( \underline{F}_{Y_x \mid W}^c(y \mid W) \right), \, \Exp \left( \overline{F}_{Y_x \mid W}^c(y \mid W) \right) \right].
\]

\item $Q_{Y_x}(\tau)$ lies in the set
\[
	\left[\underline{Q}^c_{Y_x}(\tau),\overline{Q}^c_{Y_x}(\tau)\right]
	\equiv
	\left[(\overline{F}^c_{Y_x})^{-1}(\tau), \, (\underline{F}_{Y_x}^c)^{-1}(\tau)\right].
\]

\item $\text{QTE}(\tau)$ lies in the set
\[
	\left[\underline{\text{QTE}}^c(\tau),\overline{\text{QTE}}^c(\tau)\right]
	\equiv
	\left[\underline{Q}_{Y_1}^c(\tau) - \overline{Q}_{Y_0}^c(\tau), \, \overline{Q}_{Y_1}^c(\tau) - \underline{Q}_{Y_0}^c(\tau)\right].
\]
\end{enumerate}
Moreover, the interiors of these sets are sharp.
\end{corollary}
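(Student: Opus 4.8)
The plan is to build all three bound sets sequentially from the conditional-cdf bounds of proposition \ref{prop:condcdf_id}, and then to obtain sharpness by integrating the attainable conditional cdfs constructed there over the marginal law of $W$. For the containment claims, I would first integrate the pointwise conditional bounds: since proposition \ref{prop:condcdf_id} gives $\underline{F}_{Y_x \mid W}^c(y \mid w) \leq F_{Y_x \mid W}(y \mid w) \leq \overline{F}_{Y_x \mid W}^c(y \mid w)$ for every $w$, and all three functions take values in $[0,1]$, taking expectations over $W$ preserves the inequalities and yields part 1 through the law of iterated expectations $F_{Y_x}(y) = \Exp(F_{Y_x \mid W}(y \mid W))$. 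For part 2 I would note that $\underline{F}^c_{Y_x}$ and $\overline{F}^c_{Y_x}$ are proper cdfs (being expectations of proper cdfs they are nondecreasing and right-continuous with the correct limits), so their left inverses are well defined; because inverting a monotone function reverses inequalities, the uniform-in-$y$ cdf bounds translate into $(\overline{F}^c_{Y_x})^{-1}(\tau) \leq Q_{Y_x}(\tau) \leq (\underline{F}^c_{Y_x})^{-1}(\tau)$, which is exactly the stated quantile interval. Part 3 then follows immediately by differencing the quantile bounds for $x=1$ and $x=0$.

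For the cdf and quantile sharpness I would use the attainable conditional cdfs $F_{Y_x \mid W}^c(\cdot \mid w; \epsilon,\eta)$ from part 2 of proposition \ref{prop:condcdf_id}, and integrate them over $W$ to obtain attainable marginal cdfs $F^c_{Y_x}(\cdot;\epsilon,\eta) = \Exp(F_{Y_x \mid W}^c(\cdot \mid W;\epsilon,\eta))$. Fixing $y$ and varying $\epsilon$ continuously (part 2(c)) makes $F^c_{Y_x}(y;\epsilon,\eta)$ a continuous function of $\epsilon$, so by the intermediate value theorem every value between $F^c_{Y_x}(y;0,\eta)$ and $F^c_{Y_x}(y;1,\eta)$ is attained; letting $\eta \searrow 0$ and applying monotone convergence to the pointwise monotone cdf limits of part 2(b) sends these endpoints to $\underline{F}^c_{Y_x}(y)$ and $\overline{F}^c_{Y_x}(y)$, establishing sharpness of the open cdf interval. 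Quantile sharpness follows by inverting these same attainable cdfs: the standard fact that pointwise cdf convergence implies convergence of left inverses at continuity points—guaranteed here by the strict monotonicity and continuity in A\ref{assn:continuity}.\ref{A1_1}—sends the induced quantiles to the quantile bounds as $\eta \searrow 0$.

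The main obstacle is the joint sharpness underlying part 3, which requires a single admissible data-generating process in which $Q_{Y_1}(\tau)$ sits near one endpoint while $Q_{Y_0}(\tau)$ simultaneously sits near the opposite endpoint. This is precisely what the $\epsilon$-versus-$(1-\epsilon)$ coupling in proposition \ref{prop:condcdf_id}(2a) supplies: at $\epsilon = 0$ the $Y_1$ cdf approaches its lower bound (so its quantile approaches the upper quantile bound) while the $Y_0$ cdf, driven by $1-\epsilon = 1$, approaches its upper bound (so its quantile approaches the lower quantile bound), with the reverse holding at $\epsilon = 1$. Because part 2(a) guarantees that a single joint distribution of $(Y_1,Y_0,X)\mid W=w$ underlies both cdfs at each $\epsilon$, the difference of induced $\tau$-quantiles, $Q_1(\epsilon,\eta) - Q_0(\epsilon,\eta)$, is realized by an admissible model; continuity in $\epsilon$ then lets this difference trace out the entire open QTE interval as $\epsilon$ sweeps $[0,1]$, and the $\eta \searrow 0$ limit pushes its two extremes to $\underline{\text{QTE}}^c(\tau)$ and $\overline{\text{QTE}}^c(\tau)$. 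The only delicate point is again the cdf-to-quantile convergence as $\eta \searrow 0$, which I would dispatch with the same continuity-point argument, noting that the restriction to interiors is what accommodates the possible non-attainability of the bound functions themselves flagged after proposition \ref{prop:condcdf_id}.
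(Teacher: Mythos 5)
Your proposal follows the paper's own proof almost step for step: the containment claims come from integrating the conditional bounds of proposition \ref{prop:condcdf_id} over the marginal law of $W$, and sharpness comes from integrating the attainable conditional cdfs $F_{Y_x \mid W}^c(\cdot \mid w;\epsilon,\eta)$, using continuity in $\epsilon$ plus the intermediate value theorem, taking $\eta \searrow 0$ limits, inverting for quantiles, and differencing for the QTE. Your part-3 discussion of the $\epsilon$-versus-$(1-\epsilon)$ coupling is in fact a more explicit account of what the paper compresses into the single remark that the joint identified set for $(Q_{Y_1}(\tau), Q_{Y_0}(\tau))$ is the product of the marginal identified sets, and your use of monotone convergence in place of the paper's dominated convergence for the cdf limits is immaterial since the integrands are monotone and bounded by one.

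The one step whose justification does not go through as written is the cdf-to-quantile convergence as $\eta \searrow 0$. You invoke the standard fact that pointwise cdf convergence implies convergence of left inverses at continuity points, and claim those continuity points are ``guaranteed here by the strict monotonicity and continuity in A\ref{assn:continuity}.\ref{A1_1}.'' But A\ref{assn:continuity}.\ref{A1_1} constrains the \emph{true} conditional cdfs, whereas the limit functions relevant to this step are the bound functions $\underline{F}^c_{Y_x}$ and $\overline{F}^c_{Y_x}$ --- and the paper explicitly warns, right after proposition \ref{prop:condcdf_id}, that the bound functions ``are not always continuous and so violate A\ref{assn:continuity}.\ref{A1_1}.'' What the continuity-point argument actually needs is that the limit cdf has no flat stretch at level $\tau$ (equivalently, that its left inverse is continuous at $\tau$), and this property must be checked for the bound functions directly rather than inferred from an assumption they are known to violate. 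The paper sidesteps the issue with a different device: since $\Exp[F_{Y_x \mid W}^c(y \mid W;0,\eta_n)]$ decreases pointwise to $\underline{F}^c_{Y_x}(y)$ along $\eta_n \searrow 0$, the corresponding left inverses form a monotone sequence, and a direct argument using only right-continuity of the approximating cdfs identifies their limit as $(\underline{F}^c_{Y_x})^{-1}(\tau) = \overline{Q}^c_{Y_x}(\tau)$, with a similar argument for the other endpoint. Replacing your citation of the weak-convergence fact with this monotone-inverse argument (or with a direct verification that the marginal bound cdfs are strictly increasing at their level-$\tau$ crossing) closes the gap; everything else in your proposal matches the paper.
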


As with our earlier results, all three results in this corollary are also functionally sharp. And again all these bounds collapse to a single point as $c$ approaches zero.

\subsection*{The ATT}\label{sec:ATT}

The average effect of treatment on the treated is
\[
	\text{ATT} = \Exp(Y_1 - Y_0 \mid X=1).
\]
Under conditional independence, $\text{ATT} = \Exp[ \text{CATE}(W) \mid X=1 ]$. That is, we average CATE over the distribution of covariates $W$ within the treated group, whereas ATE is the unconditional average of CATE over the covariates, $\text{ATE} = \Exp[ \text{CATE}(W) ]$. In corollary \ref{corollary:CATE_ATE_bounds} we showed that the bounds on ATE under conditional $c$-dependence are simply the average of our CATE bounds over the marginal distribution of $W$. Hence a natural first approach for obtaining bounds on ATT is to average our CATE bounds over the distribution of $W \mid X=1$, just as we do in the baseline case of $c=0$. For $c > 0$, however, this approach is not correct. This follows since, when conditional independence fails, potential outcomes are not independent of treatment assignment, even conditional on covariates. That is, for $x \in \{0,1\}$, the distribution of $Y_x \mid X=1,W=w$ is not necessarily the same as the distribution of $Y_x \mid X=0,W=w$. Hence the parameters $\Exp(Y_1-Y_0 \mid W=w)$ and $\Exp(Y_1 - Y_0 \mid X=1,W=w)$ are not necessarily equal. Below, we derive the correct identified set for ATT under conditional $c$-dependence.


As in the baseline case of conditional independence, equation \eqref{eq:potential outcomes} immediately implies that $\Exp(Y_1 \mid X=1)$ is point identified by $\Exp(Y \mid X=1)$ without any assumptions on the dependence between $Y_1$ and $X$. Hence only $\Exp(Y_0 \mid X=1)$ will be partially identified under deviations from conditional independence. Thus we relax A\ref{assn:continuity}.\ref{A1_3} and A\ref{assn:cdep} as follows.

\begin{assump}{A\ref{assn:continuity}.\ref{A1_3}$^\prime$}
$p_1 > 0$ and, for all $w \in \supp(W)$, $p_{1 \mid w} < 1$.
%
\end{assump}

\begin{assump}{A\ref{assn:cdep}$^\prime$}
$X$ is conditionally $c$-dependent with $Y_0$ given $W$.
\end{assump}

By the law of iterated expectations and some algebra,
\[
	\Exp(Y_0 \mid X=1) = \frac{\Exp(Y_0) - p_0 \Exp(Y \mid X=0)}{p_1}.
\]
Hence bounds on the conditional mean can be obtained from bounds on the unconditional mean $\Exp(Y_0)$. Let 
\[
	\underline{E}_0^c(w) = \int_0^1 \underline{Q}_{Y_0}^c(\tau \mid w) \; d\tau 
	\qquad \text{and} \qquad
	\overline{E}_0^c(w) = \int_0^1 \overline{Q}_{Y_0}^c(\tau \mid w) \; d\tau
\]
denote bounds on $\Exp(Y_0 \mid W=w)$. Averaging these over the marginal distribution of $W$ yields bounds on $\Exp(Y_0)$, denoted by
\[
	\underline{E}_0^c = \Exp \big( \underline{E}_0^c(W) \big)
	\qquad \text{and} \qquad
	\overline{E}_0^c = \Exp \big( \underline{E}_0^c(W) \big).
\]

\begin{proposition}[Bounds on ATT]\label{prop:ATT bounds}
Suppose A\ref{assn:continuity}.\ref{A1_1}, A\ref{assn:continuity}.\ref{A1_2}, A\ref{assn:continuity}.\ref{A1_3}$^\prime$, and A\ref{assn:cdep}$^\prime$ hold. Suppose the joint distribution of $(Y,X,W)$ is known. Then $\text{ATT}$ lies in the set
\[
	\left[\Exp(Y \mid X=1) - \frac{\overline{E}_0^c - p_0\Exp(Y \mid X=0)}{p_1}, \, \Exp(Y \mid X=1) - \frac{\underline{E}_0^c - p_0\Exp(Y \mid X=0)}{p_1}\right]
\]
assuming these means exist (including possibly $\pm \infty$). Moreover, the interior of this set is sharp.
\end{proposition}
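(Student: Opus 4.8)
The plan is to reduce bounding $\text{ATT}$ to bounding the single unknown mean $\Exp(Y_0)$, and then to transport the bounds on $\Exp(Y_0)$ that are already available from our earlier results through the affine identity stated just before the proposition. Writing $\text{ATT} = \Exp(Y_1 \mid X=1) - \Exp(Y_0 \mid X=1)$, the first term is point identified: equation \eqref{eq:potential outcomes} gives $Y = Y_1$ on the event $\{X=1\}$, so $\Exp(Y_1 \mid X=1) = \Exp(Y \mid X=1)$ with no restriction on the dependence between $Y_1$ and $X$. Hence only $\Exp(Y_0 \mid X=1)$ is unknown, and the displayed identity $\Exp(Y_0 \mid X=1) = [\Exp(Y_0) - p_0 \Exp(Y \mid X=0)]/p_1$ --- which follows from the law of iterated expectations together with $\Exp(Y_0 \mid X=0) = \Exp(Y \mid X=0)$ --- shows that, holding the observed distribution fixed, $\Exp(Y_0 \mid X=1)$ is a strictly increasing affine function of $\Exp(Y_0)$ alone. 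Consequently $\text{ATT}$ is a strictly \emph{decreasing} affine function of $\Exp(Y_0)$, so its identified set is the image of the identified set for $\Exp(Y_0)$ under this map, with the roles of the two endpoints reversed. This is precisely why the naive averaging of CATE bounds over $W \mid X=1$ must be avoided.

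Next I would establish the bounds on $\Exp(Y_0)$. Here I impose $c$-dependence only on $Y_0$ (assumption A\ref{assn:cdep}$^\prime$) and the weaker overlap condition A\ref{assn:continuity}.\ref{A1_3}$^\prime$, so I cannot quote corollary \ref{corollary:CATE_ATE_bounds} verbatim, but the argument is the same restricted to $x=0$. At every $w$ with $p_{1 \mid w} > 0$ the conditional quantile bounds \eqref{eq:quantile upperbound}--\eqref{eq:quantile lowerbound} apply as in proposition \ref{prop:condcdf_id}, while at any $w$ with $p_{1 \mid w} = 0$ all units are controls, so $Y_0 \mid W=w$ is point identified by $Y \mid X=0, W=w$ and the bound functions collapse; in either case $\Exp(Y_0 \mid W=w) = \int_0^1 Q_{Y_0 \mid W}(\tau \mid w)\,d\tau$ is squeezed between $\underline{E}_0^c(w)$ and $\overline{E}_0^c(w)$. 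Averaging over the marginal distribution of $W$ and using $\Exp(Y_0) = \Exp(\Exp(Y_0 \mid W))$ gives $\Exp(Y_0) \in [\underline{E}_0^c, \overline{E}_0^c]$. Substituting the two endpoints into the affine map yields the stated outer bounds, with $\overline{E}_0^c$ producing the lower endpoint and $\underline{E}_0^c$ the upper one.

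The main work is sharpness of the interior. Because the affine map is a fixed bijection once the observed distribution is pinned down, it suffices to show that every value of $\Exp(Y_0)$ strictly inside $(\underline{E}_0^c, \overline{E}_0^c)$ is attained by some joint law of $(Y_1, Y_0, X, W)$ satisfying A\ref{assn:continuity}.\ref{A1_1}, A\ref{assn:continuity}.\ref{A1_2}, A\ref{assn:continuity}.\ref{A1_3}$^\prime$, and A\ref{assn:cdep}$^\prime$ and consistent with the observed data. For the $Y_0$ marginal I would use the attainable conditional cdfs $F_{Y_0 \mid W}^c(\cdot \mid w; \epsilon, \eta)$ from part 2 of proposition \ref{prop:condcdf_id}: for fixed small $\eta$ these are genuine attainable conditional distributions that vary continuously in $\epsilon \in [0,1]$ and, as $\eta \searrow 0$, sweep monotonically from the lower to the upper conditional-quantile bounds. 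Using a common $\epsilon$ across $w$, the induced value $m(\epsilon, \eta)$ of $\Exp(Y_0)$ is continuous in $\epsilon$, so by the intermediate value theorem every target strictly between the $\eta$-approximated conditional-mean averages is hit; letting $\eta \searrow 0$ those averages increase to the full open interval $(\underline{E}_0^c, \overline{E}_0^c)$. For the $Y_1$ part nothing further is required, since A\ref{assn:cdep}$^\prime$ places no restriction on $Y_1$: I may complete the joint law by setting $Y_1 = Y$ on $\{X=1\}$ and drawing $Y_1 \mid X=0, W=w$ from any distribution on $[\underline{y}_1(w), \overline{y}_1(w)]$ consistent with A\ref{assn:continuity}. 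Because $\Exp(Y_1 \mid X=1) = \Exp(Y \mid X=1)$ holds automatically and the identity then fixes $\Exp(Y_0 \mid X=1)$ from $\Exp(Y_0)$, the constructed law delivers exactly the target value of $\text{ATT}$.

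The delicate points I expect to manage are the integrability bookkeeping when $\supp(Y_0 \mid W=w)$ is unbounded and the bounds may be $\pm\infty$ --- so the interchanges of limit in $\eta$, integral over $\tau$, and expectation over $W$ must be justified by monotone or dominated convergence under the stated hypothesis that the relevant means exist --- and the reason the endpoints are excluded: the limiting bound functions $\overline{F}_{Y_0 \mid W}^c$ and $\underline{F}_{Y_0 \mid W}^c$ can have jump discontinuities on the support, violating A\ref{assn:continuity}.\ref{A1_1}, so the extreme values $\underline{E}_0^c$ and $\overline{E}_0^c$ themselves need not be attainable, exactly as in proposition \ref{prop:condcdf_id}. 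This is the hard part, and it is why only the interior of the $\text{ATT}$ set is claimed to be sharp.
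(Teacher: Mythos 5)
Your proposal is correct and takes essentially the same approach as the paper: reduce $\text{ATT}$ to the single unknown mean $\Exp(Y_0)$ via the affine identity $\text{ATT} = \Exp(Y \mid X=1) - [\Exp(Y_0) - p_0 \Exp(Y \mid X=0)]/p_1$, establish that $(\underline{E}_0^c, \overline{E}_0^c)$ is sharp for $\Exp(Y_0)$ by an argument in the style of corollary \ref{corollary:CATE_ATE_bounds} built on the attainable conditional cdfs of proposition \ref{prop:condcdf_id}, and substitute. The only difference is internal to that sharpness step---you run the intermediate value theorem in a common $\epsilon$ across $w$, whereas the argument the paper cites (part 2 of corollary \ref{corollary:CATE_ATE_bounds}) truncates a point-identified candidate function at a level $M$ and runs the intermediate value theorem in $M$---but this is a minor variation resting on the same machinery, and your handling of the weakened overlap condition (the case $p_{1 \mid w} = 0$) and of the excluded endpoints matches the paper's intent.
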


Bounds for the average effect of treatment on the untreated, $\text{ATU} = \Exp(Y_1 - Y_0 \mid X=0)$, can be obtained similarly.


\subsection*{Bounds with binary outcomes}\label{sec:binaryOutcomes}

Here we drop the continuity assumption A\ref{assn:continuity}.\ref{A1_1} and instead consider binary potential outcomes $Y_x$. We replace the support assumption A\ref{assn:continuity}.\ref{A1_2} by the following.

\begin{assump}{A\ref{assn:continuity}.\ref{A1_2}$^\prime$}
For all $x,x' \in \{0,1\}$ and $w \in \supp(W)$, $\supp(Y_x \mid X=x', W=w) = \{ 0,1 \}$.
\end{assump}

This assumption is equivalent to $\Prob(Y_x = 1 \mid X=x', W=w) \in (0,1)$ for all $x,x' \in \{ 0,1 \}$ and $w \in \supp(W)$. Let $p_{1 \mid x,w} = \Prob(Y=1 \mid X=x,W=w)$. Define
\[
	\overline{P}^c_x(1 \mid w)
	= \min\left\{\frac{p_{1 \mid x,w}p_{x \mid w}}{p_{x \mid w} - c}\indicator(p_{x \mid w}>c)+\indicator(p_{x \mid w}\leq c), \, p_{1 \mid x,w}p_{x \mid w} + (1-p_{x \mid w})\right\}
\]
and
\[
	\underline{P}^c_x(1 \mid w) = \frac{ p_{1 \mid x,w}p_{x \mid w} }{\min\{p_{x \mid w} + c,1\}}.
\]

\begin{proposition}\label{prop:binary_outcomes}
Suppose A\ref{assn:continuity}.\ref{A1_2}$^\prime$, A\ref{assn:continuity}.\ref{A1_3}, and A\ref{assn:cdep} hold. Suppose the joint distribution of $(Y,X,W)$ is known. Let $x \in \{ 0, 1 \}$ and $w \in \supp(W)$. Then
\[
	\Prob(Y_x = 1 \mid W=w) \in \left[ \underline{P}_x^c(1 \mid w),\overline{P}_x^c(1 \mid w) \right].
\]
Moreover, the interior of this set is sharp.
\end{proposition}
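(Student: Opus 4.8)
The plan is to prove the result by a direct, finite-dimensional computation, since the continuity assumption A\ref{assn:continuity}.\ref{A1_1} now fails and theorem \ref{lemma:c-dep_cdf_bounds} (which requires $U$ continuously distributed) no longer applies. Fix $x \in \{0,1\}$ and $w \in \supp(W)$ and abbreviate $\theta = \Prob(Y_x = 1 \mid W=w)$, the target. The joint law of $(Y_x, X) \mid W=w$ is a $2\times 2$ table. By the potential outcomes equation \eqref{eq:potential outcomes}, $Y = Y_x$ on $\{X=x\}$, so the two cells with $X=x$ are point identified: $\Prob(Y_x = 1, X = x \mid W=w) = p_{1 \mid x,w} p_{x \mid w}$ and $\Prob(Y_x=0, X=x \mid W=w) = (1-p_{1\mid x,w})p_{x\mid w}$. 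The only unidentified object is the split of the mass $p_{x' \mid w}$ (with $x' = 1-x$) between $\{Y_x=1\}$ and $\{Y_x=0\}$, so $\theta$ is the only free parameter.

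First I would record the law-of-total-probability (no assumptions) bounds: letting the counterfactual cell $\Prob(Y_x=1, X=x'\mid W=w)$ range over $[0, p_{x'\mid w}]$ gives $\theta \in [\,p_{1\mid x,w}p_{x\mid w},\, p_{1\mid x,w}p_{x\mid w} + (1-p_{x\mid w})\,]$; the upper endpoint is the second term in $\overline{P}^c_x(1\mid w)$. Next I would impose conditional $c$-dependence of $X$ with $Y_x$. Writing the two relevant conditional treatment probabilities as ratios in $\theta$,
\[
	\Prob(X=1 \mid Y_x = 1, W=w) = \frac{\Prob(X=1, Y_x = 1 \mid W=w)}{\theta}, \qquad \Prob(X=1\mid Y_x=0,W=w) = \frac{\Prob(X=1,Y_x=0\mid W=w)}{1-\theta},
\]
and imposing $|\,\cdot - p_{1\mid w}\,|\le c$ at each of $y_x = 0$ and $y_x=1$ turns A\ref{assn:cdep} into four inequalities in $\theta$ (linear after clearing the positive denominators $\theta$ and $1-\theta$). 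Solving these and intersecting with the no-assumptions interval produces the closed-form endpoints: the case split in $\indicator(p_{x\mid w}>c)$ reflects the sign of the denominator $p_{x\mid w}-c$, and the $\min\{p_{x\mid w}+c,1\}$ in $\underline{P}^c_x(1\mid w)$ handles the regime where the $c$-dependence constraint is slack and only the law of total probability binds.

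Finally I would establish sharpness of the interior. For each candidate $\theta$ strictly inside $[\underline{P}^c_x(1\mid w),\overline{P}^c_x(1\mid w)]$ I would exhibit an explicit joint distribution of $(Y_1, Y_0, X)\mid W=w$ consistent with A\ref{assn:continuity}.\ref{A1_2}$^\prime$, A\ref{assn:continuity}.\ref{A1_3}, and A\ref{assn:cdep}, reproducing the observed cells and attaining $\Prob(Y_x=1\mid W=w)=\theta$; the $x=0$ case follows symmetrically using the equivalent characterization of $c$-dependence in terms of $\Prob(X=0\mid \cdots)$ noted after proposition \ref{lemma:c-dep_KS_equivalence}.

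I expect the main obstacle to be this last construction, for two reasons. First, although only the marginal $c$-dependence of $X$ with $Y_x$ enters the bound, A\ref{assn:cdep} requires $c$-dependence with \emph{both} $Y_1$ and $Y_0$, so the constructed table for $(Y_x,X)$ must extend to a joint law of $(Y_1,Y_0,X)\mid W=w$ in which the \emph{other} potential outcome also matches its identified cells while keeping $\Prob(X=1\mid Y_{x'}, W=w)$ within $c$ of $p_{1\mid w}$. Second, care is needed at the endpoints: as in proposition \ref{prop:condcdf_id}, the extreme conditional laws may only be approached in a limit, which is why the statement claims sharpness of the \emph{interior} rather than attainment of the endpoints themselves.
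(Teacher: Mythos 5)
Your overall architecture is the same as the paper's: reduce to the $2\times 2$ table of $(Y_x,X)\mid W=w$, note that the two $X=x$ cells are point identified so that $\theta = \Prob(Y_x=1\mid W=w)$ is the single free parameter, and prove sharpness by exhibiting the table for each interior value. But one step of your plan fails, and it is the central one: ``solving these [four inequalities] and intersecting with the no-assumptions interval produces the closed-form endpoints.'' It does not. The paper derives $\overline{P}_x^c(1\mid w)$ and $\underline{P}_x^c(1\mid w)$ from the $c$-dependence restriction imposed \emph{only at} $y_x=1$ (via $\Prob(Y_x=1\mid W=w) = p_{1\mid x,w}p_{x\mid w}/\Prob(X=x\mid Y_x=1,W=w)$) together with the law of total probability. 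You, following Definition \ref{def:c-dep} literally (the supremum runs over $\supp(Y_x\mid W=w)=\{0,1\}$ under A\ref{assn:continuity}.\ref{A1_2}$^\prime$), also impose it at $y_x=0$. That constraint has an identified numerator too, $\Prob(X=x,Y_x=0\mid W=w) = (1-p_{1\mid x,w})p_{x\mid w}$, and it contributes the additional upper bound $\theta \le 1 - \frac{(1-p_{1\mid x,w})p_{x\mid w}}{\min\{p_{x\mid w}+c,1\}}$ (and an analogous additional lower bound), which do not appear in the proposition's formulas. These can strictly bind: with $p_{x\mid w}=1/2$, $c=1/10$, $p_{1\mid x,w}=1/2$, your system gives $\theta \le 7/12$, whereas $\overline{P}_x^c(1\mid w) = \min\{5/8,3/4\} = 5/8$. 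So your derivation cannot terminate in the stated closed forms.

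This is not a defect you can patch in the write-up, because it is a conflict with the statement itself. In the same example take $p^* = 0.6 \in (5/12,\,5/8)$, an interior point of the proposition's set. Since $\theta$ pins down the entire table, the only candidate law has $\Prob(Y_x=1,X=x\mid W=w)=0.25$, $\Prob(Y_x=0,X=x\mid W=w)=0.25$, $\Prob(Y_x=0\mid W=w)=0.4$, hence $\Prob(X=x\mid Y_x=0,W=w)=0.625$, which deviates from $p_{x\mid w}=0.5$ by $0.125 > c$. So $p^*$ is not attainable, and the interior of $[\underline{P}_x^c(1\mid w),\overline{P}_x^c(1\mid w)]$ is not sharp. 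Your route, carried out faithfully, therefore proves the inclusion a fortiori (your set is smaller) but \emph{refutes} the sharpness claim rather than proving it. It also pinpoints where the paper's own proof goes wrong: its sharpness argument verifies $c$-dependence of the constructed table only at $y_x=1$ and then asserts that ``a similar calculation yields the same result for $\Prob(X=1\mid Y_x=0,W=w)$''; the example above falsifies that assertion. The sharp endpoints are the ones your four-inequality system produces, e.g.\ the correct upper endpoint is $\min\left\{\frac{p_{1\mid x,w}p_{x\mid w}}{p_{x\mid w}-c}\indicator(p_{x\mid w}>c)+\indicator(p_{x\mid w}\le c),\; 1-\frac{(1-p_{1\mid x,w})p_{x\mid w}}{\min\{p_{x\mid w}+c,1\}}\right\}$, whose second term subsumes the paper's law-of-total-probability term.

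Two smaller points on your anticipated obstacles. The extension of the $(Y_x,X)$ table to a joint law of $(Y_1,Y_0,X)\mid W=w$ is not a real obstacle: A\ref{assn:cdep} restricts only the two bivariate laws separately, so give $Y_{1-x}$ its observed conditional distribution independently of $X$ given $W=w$ (hence $0$-dependent) and couple $Y_1,Y_0$ arbitrarily given $(X,W)$, as the paper does implicitly elsewhere. And the reason the statement is about the \emph{interior} is not a limiting construction as in proposition \ref{prop:condcdf_id}: here endpoints can fail to be attainable only because A\ref{assn:continuity}.\ref{A1_2}$^\prime$ forces the counterfactual conditional probability $\Prob(Y_x=1\mid X=1-x,W=w)$ to lie strictly in $(0,1)$.
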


Bounds for $\Prob(Y_x=0 \mid W=w)$ obtain immediately by taking complements. Averaging over the marginal distribution of $W$ yields
\[
	\Prob(Y_x=1) \in \left[ \Exp \left( \underline{P}_x^c(1 \mid W) \right), \, \Exp \left( \overline{P}_x^c(1 \mid W) \right) \right]
\]
with a sharp interior.

Bounds for average treatment effects $\Prob(Y_1=1) - \Prob(Y_0=1)$ can be obtained by combining the bounds for each separate probability $\Prob(Y_x=1)$, $x \in \{ 0,1 \}$, similarly to equation \eqref{eq:CQTE_bounds}.

\subsection*{Numerical illustration}\label{sec:numericalIllustration}

We conclude this section with a brief numerical illustration. For $x=0,1$ and $w=0,1$, suppose the density of $Y \mid X=x,W=w$ is
\[
	f_{Y \mid X,W}(y \mid x,w) = \frac{1}{\gamma_X x + \gamma_W w + \sigma} \phi_{[-4,4]} \left( \frac{y - (\pi_X x + \pi_W w)}{\gamma_X x + \gamma_W w + \sigma} \right)
\]
where $\phi_{[-4,4]}$ is the pdf for the truncated standard normal on $[-4,4]$. $X$ and $W$ are binary with
\[
	\Prob(X=1) = p_1,
	\quad
	\Prob(W=1) = q,
	\quad \text{and} \quad
	\Prob(X=1 \mid W=w) = p_{1 \mid w}
\]
for $w=0,1$. We let $(\pi_X,\pi_W) = (1,1)$, $(\gamma_X,\gamma_W) = (0.1,0.1)$, $p_1 = 0.5$, and $q = 0.5$ in all dgps. We specify the choice of $p_{1 \mid w}$ and $\sigma$ below.

Under the conditional independence assumption, this dgp implies that treatment effects are heterogeneous, with an average treatment effect of $\text{ATE} = \pi_X = 1$. To examine the sensitivity of this finding to partial failure of conditional independence, figure \ref{numericalIllustration_identifiedSets} shows identified sets for ATE under conditional $c$-dependence for $c$ from zero to one. First consider the solid lines, which are the same in both plots. These correspond to the dgp with $(p_{1 \mid 1},p_{1 \mid 0}) = (0.6, 0.4)$ and $\sigma = 0.965$. We see that ATE under conditional independence is positive, and that this conclusion is robust to deviations of up to about $c=0.26$ from independence, but not to larger deviations. 

\begin{figure}[t]
\centering
\includegraphics[width=80mm]{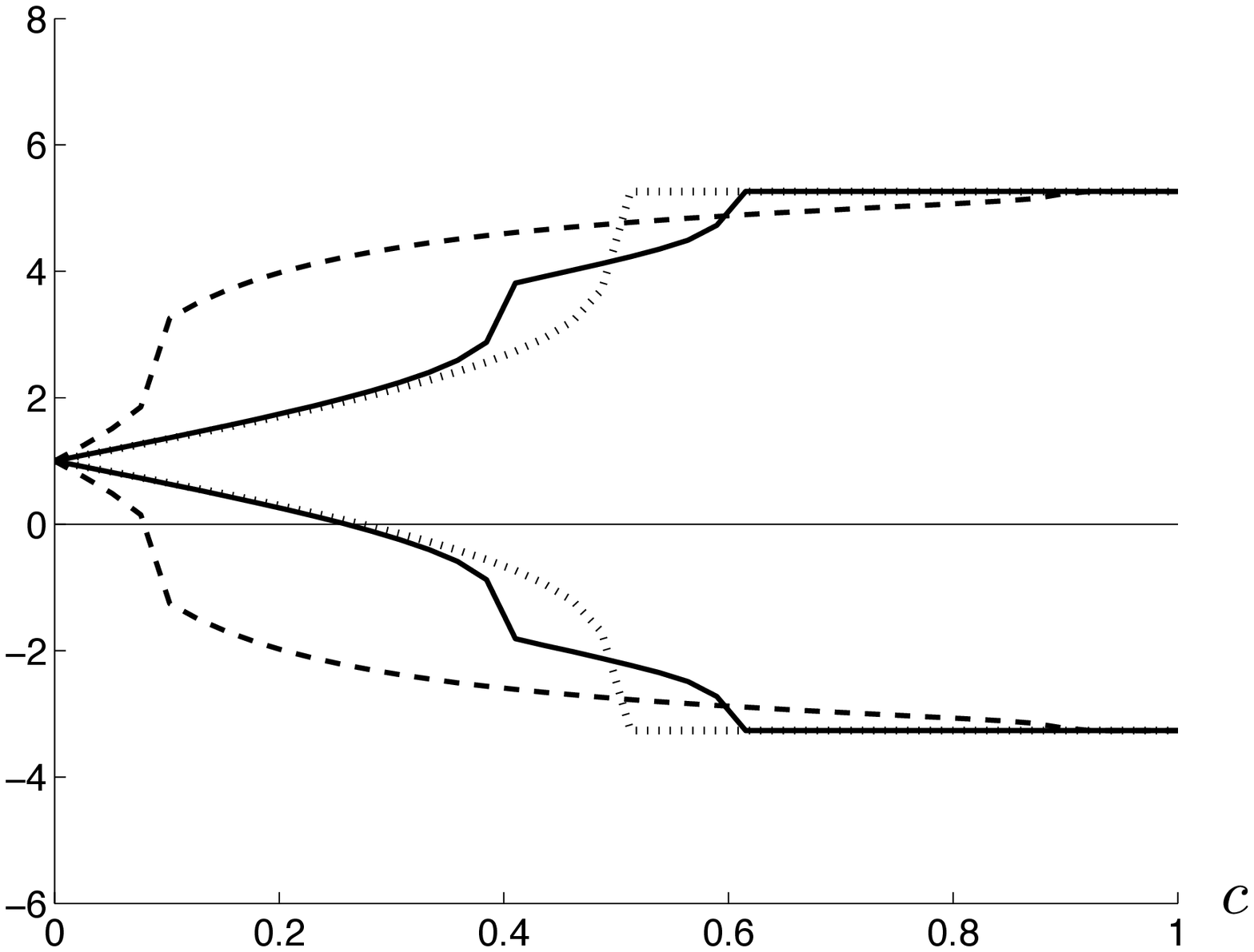}
\includegraphics[width=80mm]{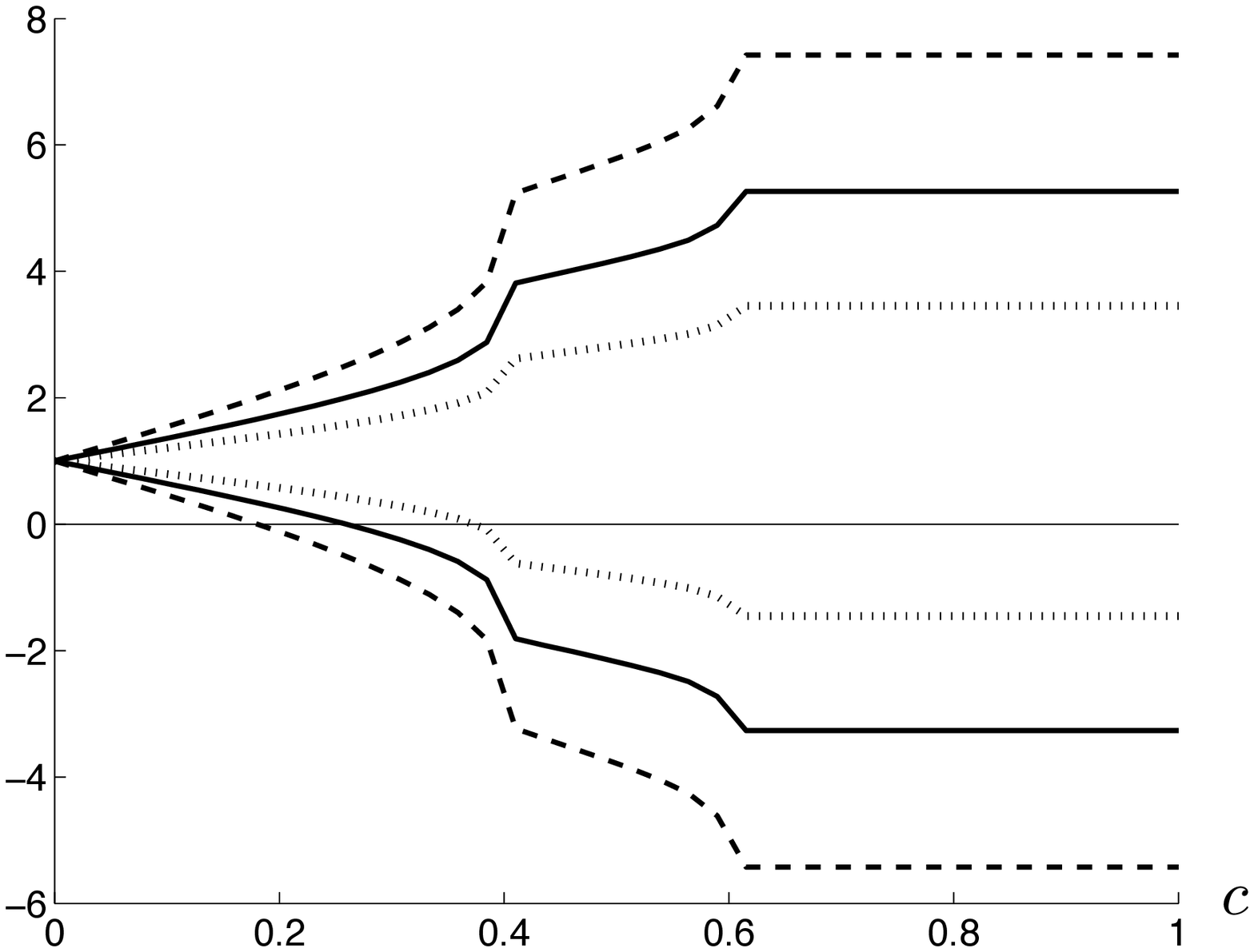} 
\caption{Identified sets for ATE, and how they depend on the dgp and the value of $c$. Left: For three dgps, corresponding to three values of the observed propensity score $p_{1 \mid 1}$ ($0.9$ dashed lines, $0.6$ solid lines, $0.5$ dotted lines). Right: For three dgps, corresponding to three values of $R^2$ (15\% dashed lines, 30\% solid lines, 60\% dotted lines).}
\label{numericalIllustration_identifiedSets}
\end{figure}

Next we vary the dgp parameters to examine how our identified sets depend on features of the distribution of $(Y,X,W)$. \cite{Imbens2003} performed similar dgp comparisons for his method using empirical datasets. In the left plot we change the observed propensity score $p_{1 \mid w}$ while holding all other parameters fixed. Relative to the solid lines, if we increase the variation in the observed propensity score by setting $(p_{1 \mid 1}, p_{1 \mid 0}) = (0.9,0.1)$ then the bounds widen for most values of $c$, as shown by the dashed lines. In particular, the conclusion that ATE is positive now only holds for $c$'s less than about $0.085$. Conversely, if we eliminate the variation in the observed propensity score by setting $(p_{1 \mid 1}, p_{1 \mid 0}) = (0.5, 0.5)$ then the bounds shrink for most values of $c$, as shown by the dotted lines. The conclusion that ATE is positive now holds for slightly more values of $c$ than under the baseline dgp used for the solid lines.

Next consider the right plot. Here we change $R^2$ in the regression of $Y$ on $(1,X,W)$ while holding the observed propensity score fixed at $(p_{1 \mid 1},p_{1 \mid 0}) = (0.6,0.4)$. We vary the value of $R^2$ by varying $\sigma$. The solid lines have $R^2$ equal to 30\% (since $\sigma = 0.965$). Relative to these lines, if we decrease $R^2$ to 15\%, then the bounds widen for all values of $c$, as shown by the dashed lines. The conclusion that ATE is positive becomes less robust. Conversely, if we increase $R^2$ to 60\%, then the bounds shrink for all values of $c$, as shown by the dotted lines. The conclusion that ATE is positive becomes more robust.

The shape of the bounds depends on other features of the distribution of $(Y,X,W)$ as well. For example, if $\pi_X$ increases then all the identified sets shift upward. Hence, holding all else fixed, a larger ATE implies that the sign of ATE will be point identified under weaker independence assumptions. Similar analyses can also be done with other parameters of interest, like $\text{QTE}(\tau)$ for various values of $\tau$. Here we merely illustrate the kinds of objects empirical researchers can compute using the results we develop in this paper.

\section{Conclusion}\label{sec_conclusion}

In this paper we studied \emph{conditional $c$-dependence}, a nonparametric approach for weakening conditional independence assumptions. We used this concept to study identification of treatment effects when the conditional independence assumption partially fails, but no further data---like observations of an instrument---are available. We derived identified sets under conditional $c$-dependence for many parameters of interest, including average treatment effects and quantile treatment effects. These identified sets have simple, analytical characterizations. These analytical identified sets lend themselves to sample analog estimation and inference via the existing literature on inference under partial identification (see \citealt{CanayShaikh2017} for a survey). Our identification results can be used to analyze the sensitivity of one's results to the conditional independence assumption, without relying on auxiliary parametric assumptions.

Several questions remain. First, we focused on identification of $D$-parameters (\citealt{Manski2003}, page 11). Many other parameters, like the variance of potential outcomes or Gini coefficients, are not $D$-parameters. Nonetheless, the cdf and mean bounds we derived can be used as a direct input into theorem 2 of \cite{Stoye2010} to derive explicit, analytical bounds on these spread parameters. In future work it would be helpful to obtain precise expressions for these spread parameter bounds. Finally, while we have given several suggestions for how to interpret conditional $c$-dependence, there are likely other possibilities. For example, one could adapt Rosenbaum and Silber's \citeyearpar{RosenbaumSilber2009} `amplification' approach to our setting. Incorporating this or other ideas from the extensive literature on conditional treatment probabilities would be a helpful addition to our nonparametric sensitivity analysis.

\singlespacing
\bibliographystyle{econometrica}
\bibliography{BF_paper}

\appendix
\section{Appendix: Proofs}

\begin{proof}[Proof of proposition \ref{lemma:c-dep_KS_equivalence}]
By A\ref{assn:continuity}.\ref{A1_1},
\begin{align*}
	\Prob(X=1 \mid Y_x = y_x,W=w)
		&= \Prob(X=1 \mid F_{Y_x \mid W}(Y_x \mid W) = F_{Y_x \mid W}(y_x \mid W), W=w) \\
		&= \Prob(X=1 \mid R_x = r_x, W=w)
\end{align*}
where $r_x \equiv F_{Y_x \mid W}(y_x \mid w)$. Thus equation \eqref{eq:c-indep1} is equivalent to equation (\ref{eq:c-indep1}$^\prime$). It now suffices to show that equations (\ref{eq:c-indep1}$^\prime$) and \eqref{eq:jointMinusMarginalProductEquivalenceLemma} are equivalent. We have
\begin{align*}
	&| f_{X,R_x \mid W}(x',r \mid w) - p_{x' \mid w} f_{R_x \mid W}(r \mid w) | \\
		&\quad = | \Prob(X=x' \mid R_x=r,W=w) f_{R_x \mid W}(r \mid w) -  p_{x' \mid w}f_{R_x \mid W}(r \mid w) | \\
		&\quad = | \Prob(X=x' \mid R_x=r,W=w) -  p_{x' \mid w} | \cdot f_{R_x \mid W}(r \mid w)  \\
		&\quad = | \Prob(X=x' \mid R_x=r,W=w) -  p_{x' \mid w} |
\end{align*}
where the third equality follows since $R_x \mid W=w$ is uniformly distributed on $[0,1]$. Hence
\begin{multline*}
	\sup_{r \in [0,1]} | \Prob(X = x' \mid R_x = r,W=w) - \Prob(X=x' \mid W=w) | \\
	=
	\sup_{r \in [0,1]} | f_{X,R_x \mid W}(x',r \mid w) - p_{x' \mid w} f_{R_x \mid W}(r \mid w) |.
\end{multline*}
This holds for any $x'\in\{0,1\}$, which completes the proof.
\end{proof}


The following lemma shows how to write conditional cdfs as integrals of conditional probabilities. We frequently use this result below.

\begin{lemma}\label{lemma:cdfAsIntegralOfPropensityScore}
Let $U$ be a continuous random variable. Let $X$ be a random variable with $p_x = \Prob(X=x) > 0$. Then
\[
	F_{U \mid X}(u \mid x) = \int_{-\infty}^u \frac{\Prob(X=x \mid U=v)}{p_x} \; dF_U(v).
\]
\end{lemma}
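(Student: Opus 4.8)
The plan is to reduce the claim to the standard disintegration of a joint probability by conditioning on $U$. First I would use the definition of the conditional cdf together with the hypothesis $p_x > 0$ to write
\[
	F_{U \mid X}(u \mid x) = \Prob(U \leq u \mid X=x) = \frac{\Prob(U \leq u, X=x)}{p_x},
\]
so that the entire claim reduces to showing the joint probability $\Prob(U \leq u, X=x)$ equals $\int_{-\infty}^u \Prob(X=x \mid U=v) \, dF_U(v)$.

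For this I would rewrite the joint probability as the expectation of a product of indicators, $\Prob(U \leq u, X=x) = \Exp[\indicator(U \leq u) \indicator(X=x)]$, and apply the law of iterated expectations conditioning on $U$. Since $\indicator(U \leq u)$ is $\sigma(U)$-measurable, it can be pulled outside the inner conditional expectation, giving
\[
	\Exp[\indicator(U \leq u) \indicator(X=x)]
	= \Exp\big[\indicator(U \leq u) \, \Exp[\indicator(X=x) \mid U]\big]
	= \Exp\big[\indicator(U \leq u) \, \Prob(X=x \mid U)\big].
\]
The final expectation is that of a function of $U$ alone, which by definition equals its integral against the law of $U$:
\[
	\Exp\big[\indicator(U \leq u) \, \Prob(X=x \mid U)\big] = \int_{-\infty}^u \Prob(X=x \mid U=v) \, dF_U(v).
\]
Dividing through by $p_x$ then yields the stated identity.

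The only delicate point is the measure-theoretic status of $\Prob(X=x \mid U=v)$: this regular conditional probability is pinned down only up to $F_U$-null sets, so the integrand is specified $F_U$-almost everywhere, which is exactly what an integral against $dF_U$ requires. The continuity of $U$ plays no essential role beyond identifying $dF_U$ as the relevant integrator; the argument is fundamentally just the tower property applied to indicator functions. I therefore expect no genuine obstacle beyond stating the disintegration carefully and invoking the identity $\Exp[g(U)] = \int g \, dF_U$ for the bounded measurable function $g(v) = \indicator(v \leq u)\Prob(X=x \mid U=v)$.
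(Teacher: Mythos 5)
Your proposal is correct and follows essentially the same route as the paper's proof: both reduce the claim to $p_x F_{U\mid X}(u\mid x) = \Prob(U \leq u, X = x)$, express this joint probability as $\Exp[\indicator(U \leq u)\indicator(X = x)]$, apply the tower property conditioning on $U$ while pulling out the $\sigma(U)$-measurable indicator, and identify the resulting expectation with the integral against $dF_U$. The only difference is cosmetic (you divide by $p_x$ at the start rather than at the end), plus your added remark on the almost-everywhere nature of the conditional probability, which the paper leaves implicit.
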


\begin{proof}[Proof of lemma \ref{lemma:cdfAsIntegralOfPropensityScore}]
We have
\begin{align*}
	p_x F_{U \mid X}(u \mid x)
		&= \Prob(X=x) \Prob(U \leq u \mid X=x) \\
		&= \Prob(U \leq u, X=x) \\
		&= \Exp[ \indicator(U \leq u) \indicator(X=x) ] \\
		&= \Exp( \Exp( \indicator(U \leq u) \indicator(X=x) \mid U) ) \\
		&= \Exp( \indicator(U \leq u) \Exp( \indicator(X=x) \mid U) ) \\
		&= \int_{-\infty}^\infty \indicator(v \leq u) \Exp( \indicator(X=x) \mid U=v) \; dF_U(v) \\
		&= \int_{-\infty}^u \Prob(X=x \mid U=v) \; dF_U(v).
\end{align*}
Now divide both sides by $p_x$.
\end{proof}


\begin{proof}[Proof of theorem \ref{lemma:c-dep_cdf_bounds}]
This proof has five parts: (1) Show that $\overline{F}_{U \mid X}^c(\cdot \mid x)$ is an upper bound. (2) Show that $\underline{F}_{U \mid X}^c(\cdot \mid x)$ is a lower bound. (3) Show that these bound functions are valid cdfs. (4) Show that these bounds are sharp, in the sense stated in the theorem. (5) Apply these results to obtain the pointwise sharp bounds.

\bigskip	
\noindent \textbf{Part 1.} We show that $F_{U \mid X}(u \mid x) \leq \overline{F}_{U \mid X}^c(u \mid x)$ for all $u \in \supp(U)$.
\bigskip

Let $u \in \supp(U)$ be arbitrary. First, note that
\begin{align*}
	F_{U \mid X}(u \mid x)
	&= \int_{-\infty}^u \frac{\P(X=x \mid U = v)}{p_x} \; dF_U(v) \\
	&\leq \int_{-\infty}^u \frac{p_x + c}{p_x} \; dF_U(v) \\
	&= \left(1 + \frac{c}{p_x}\right) F_U(u).
\end{align*}
The first line follows by lemma \ref{lemma:cdfAsIntegralOfPropensityScore}. The second line follows by $c$-dependence (assumption 2). Likewise,
\begin{align*}
	F_{U \mid X}(u \mid x)
	&= 1- \int_u^\infty \frac{\P(X=x \mid U = v)}{p_x} \; dF_U(v) \\
	&\leq 1 - \int_u^\infty \frac{p_x - c}{p_x} \; dF_U(v) \; dv\\
	&=  \left(1 -\frac{c}{p_x}\right) F_U(u) + \frac{c}{p_x}.
\end{align*}
Also, since
\[
	F_{U \mid X}(u \mid x) = \frac{F_U(u) - p_{1-x} F_{U \mid X}(u \mid 1-x)}{p_x}
\]
by the law of iterated expectations and $F_{U \mid X}(u \mid 1-x) \geq 0$, we have
\[
	F_{U \mid X}(u \mid x) \leq \frac{F_U(u)}{p_x}.
\]
Finally, since $F_{U \mid X}(\cdot \mid x)$ is a cdf, it satisfies $F_{U \mid X}(u \mid x) \leq 1$. Therefore, $F_{U \mid X}(u \mid x)$ is smaller than each of the four functions inside the minimum in the definition of $\overline{F}_{U \mid X}^c(u \mid x)$. Thus it is smaller than the minimum too, and hence $F_{U \mid X}(u \mid x) \leq \overline{F}_{U \mid X}^c(u \mid x)$.

\bigskip
\noindent \textbf{Part 2.} We show that $F_{U \mid X}(u \mid x) \geq \underline{F}_{U \mid X}^c(u \mid x)$ for all $u \in \supp(U)$.
\bigskip

Let $u \in \supp(U)$ be arbitrary. By a similar argument as in part 1,
\begin{align*}
	F_{U \mid X}(u \mid x)
	&= \int_{-\infty}^u \frac{\P(X=x \mid U = v)}{p_x} \; dF_U(v) \\
	&\geq \int_{-\infty}^u \frac{p_x - c}{p_x} \; dF_U(v) \\
	&= \left(1 - \frac{c}{p_x} \right) F_U(u)		
\end{align*}
and
\begin{align*}
	F_{U \mid X}(u \mid x)
	&= 1- \int_u^\infty \frac{\P(X=x \mid U = v)}{p_x} \; dF_U(v) \\
	&\geq 1 - \int_u^\infty \frac{p_x + c}{p_x} \; dF_U(v) \\
	&= \left(1 +\frac{c}{p_x}\right) F_U(u) - \frac{c}{p_x}.
\end{align*}
Also, since
\[
	F_{U \mid X}(u \mid x) = \frac{F_U(u) - p_{1-x} F_{U \mid X}(u \mid 1-x)}{p_x},
\]
$p_{1-x} = 1 - p_x$, and $F_{U \mid X}(u \mid 1-x) \leq 1$, we have that
\[
	F_{U \mid X}(u \mid x) \geq \frac{F_U(u)-1}{p_x} +1.
\]
Finally, since $F_{U \mid X}(\cdot \mid x)$ is a cdf, it satisfies $F_{U \mid X}(u \mid x) \geq 0$. Therefore, $F_{U \mid X}(u \mid x)$ is greater than each of the four functions inside the maximum in the definition of $\underline{F}_{U \mid X}^c(u \mid x)$. Thus it is greater than the maximum too, and hence $F_{U \mid X}(u \mid x) \geq \underline{F}_{U \mid X}^c(u \mid x)$.

\bigskip
\noindent \textbf{Part 3.} The functions $\underline{F}_{U \mid X}^c(\cdot \mid x)$ and $\overline{F}_{U \mid X}^c(\cdot \mid x)$ are cdfs on $\supp(U)$.
\bigskip

By definition, $\underline{F}_{U \mid X}^c(\cdot \mid x)$ and $\overline{F}_{U \mid X}^c(\cdot \mid x)$ are compositions of continuous functions (e.g., the function $F_U(u)$, by assumption \ref{thm1_A2}), and hence are also continuous. Also by definition, they approach zero as $u$ approaches $\inf \supp(U)$ and approach one as $u$ approaches $\sup \supp(U)$.

When $c \leq p_x$, the expressions
\[
	F_U(u) - \frac{c}{p_x}\min\{F_U(u),1-F_U(u)\}
	\qquad \text{and} \qquad
	F_U(u) + \frac{c}{p_x}\min\{F_U(u),1-F_U(u)\}
\]
are nondecreasing in $u$. All other arguments of $\underline{F}_{U \mid X}^c(\cdot \mid x)$ and $\overline{F}_{U \mid X}^c(\cdot \mid x)$ are also nondecreasing. Hence $\underline{F}_{U \mid X}^c(\cdot \mid x)$ and $\overline{F}_{U \mid X}^c(\cdot \mid x)$ are nondecreasing when $c \leq p_x$.

When $c > p_x$, we have
\begin{align*}
	\overline{F}_{U \mid X}^c(u \mid x)
	&= \min \left\{F_U(u) + \frac{c}{p_x}\min\{F_U(u),1-F_U(u)\},\frac{F_U(u)}{p_x},1\right\}\\
	&= \min\left\{\left(1 + \frac{c}{p_x}\right)F_U(u), \frac{c}{p_x} + \left(1 - \frac{c}{p_x}\right)F_U(u),\frac{F_U(u)}{p_x},1\right\}\\
	&= \min\left\{\left(1 + \frac{c}{p_x}\right)F_U(u),\frac{F_U(u)}{p_x}, 1\right\}
\end{align*}
since
\begin{align*}
	\frac{c}{p_x} + \left(1 - \frac{c}{p_x}\right)F_U(u)
		&= \frac{c}{p_x} [1 - F_U(u)] + 1 \cdot F_U(u) \\
		&\geq 1
\end{align*}
for all $u \in \supp(U)$. The last line follows since $F_U(u) \in [0,1]$ and $c > p_x$ and therefore this term is a linear combination of one and something greater than one. Each of the three terms remaining in the expression for $\overline{F}_{U \mid X}^c(\cdot \mid x)$ is nondecreasing in $u$ and hence $\overline{F}_{U \mid X}^c(\cdot \mid x)$ is nondecreasing in $u$ when $c > p_x$.

Likewise, when $c > p_x$,
\begin{align*}
	\underline{F}_{U \mid X}^c(u \mid x)
	&= \max\left\{\left(1 - \frac{c}{p_x}\right) F_U(u) ,\left(1 + \frac{c}{p_x}\right)F_U(u) - \frac{c}{p_x} ,\frac{F_U(u)-1}{p_x} +1,0\right\}\\
	&= \max\left\{\left(1 + \frac{c}{p_x}\right)F_U(u) - \frac{c}{p_x} ,\frac{F_U(u)-1}{p_x} +1,0\right\}
\end{align*}
since $c > p_x$ implies
\[
	\left(1 - \frac{c}{p_x}\right) F_U(u) \leq 0
\]
for all $u \in \supp(U)$. Therefore, $\underline{F}_{U \mid X}^c(\cdot \mid x)$ is also nondecreasing when $c > p_x$. Thus we have shown that, regardless of the value of $c$, $\underline{F}_{U \mid X}^c(\cdot \mid x)$ and $\overline{F}_{U \mid X}^c(\cdot \mid x)$ are nondecreasing.

Putting all of these results together, we have shown that $\underline{F}_{U \mid X}^c(\cdot \mid x)$ and $\overline{F}_{U \mid X}^c(\cdot \mid x)$ satisfy all the requirements to be valid cdfs.

\bigskip
\noindent \textbf{Part 4.} Sharpness.
\bigskip

We prove sharpness in two steps. First we construct a joint distribution of $(U,X)$ consistent with assumptions 1--4 and which yields the lower bound $\underline{F}_{U \mid X}^c(\cdot \mid x)$. And likewise for the upper bound $\overline{F}_{U \mid X}^c(\cdot \mid x)$. This yields equation \eqref{eq:intermediate R cdf bounds} for $\epsilon = 0$ and $\epsilon = 1$. Second we use certain linear combinations of these two joint distributions to obtain the case for $\epsilon \in (0,1)$.

The marginal distributions of $U$ and $X$ are prespecified. Hence to construct the joint distribution of $(U,X)$ it suffices to define conditional distributions of $X \mid U$. Specifically, for $c>0$ and $u\in\supp(U)$, define the conditional probabilities
\[
	\underline{P}_x^c(u)
	=
	\begin{cases}
		\max\{p_x - c,0\} &\text{if }  u \leq \underline{u}_x^c\\
		\min\{p_x + c,1\} &\text{if } \underline{u}_x^c < u
	\end{cases}
	\qquad
	\text{and}
	\qquad
	\overline{P}_x^c(u)
	=
	\begin{cases}
		\min\{p_x + c,1\} &\text{if }  u \leq \overline{u}_x^c \\
		\max\{p_x - c,0\} &\text{if } \overline{u}_x^c < u
	\end{cases}
\]
where
\[
	\underline{u}_x^c = F_U^{-1}\left(\frac{\min\{c,1-p_x\}}{\min\{p_x + c,1\} - \max\{p_x - c,0\}}\right)
\]
and
\[
	\overline{u}_x^c = F_U^{-1} \left(1-\frac{\min\{c,1-p_x\}}{\min\{p_x + c,1\} - \max\{p_x - c,0\}}\right).
\]
Note that $F_U(\overline{u}_x^c) + F_U(\underline{u}_x^c) = 1$. For $c=0$, define $\overline{P}_x^c(u) = \underline{P}_x^c(u) = p_x$. The definition of $\underline{u}_x^c$ above is derived as the number such that the lower bound function $\underline{F}_{U \mid X}^c(\cdot \mid x)$ can be written as
\[
	\underline{F}_{U \mid X}^c(u \mid x) =
	\begin{cases}
		\max \left\{ \left( 1 - \frac{c}{p_x} \right) F_U(u), 0 \right\}
			&\text{if $u \leq \underline{u}_x^c$} \\
		\max \left\{ \left( 1 + \frac{c}{p_x} \right) F_U(u) - \frac{c}{p_x}, \frac{F_U(u)-1}{p_x} + 1 \right\}
			&\text{if $u > \underline{u}_x^c$.}
	\end{cases}
\]
An example of this two-case form is shown in figure \ref{cIndepBoundsOnCdfs}. There we also see an example of the kink point $\underline{u}_x^c$. A similar result holds for the upper bound function, using the number $\overline{u}_x^c$.

The conditional probabilities $\underline{P}_x^c(u)$ and $\overline{P}_x^c(u)$ satisfy $c$-dependence by construction. Moreover, they are consistent with the marginal distribution of $X$, $\P(X=x) = p_x$, since
\begin{align*}
	\int_{\supp(U)} \underline{P}_x^c(u) \; dF_U(u)
	&= \max\{p_x - c,0\} \Prob(U \leq \underline{u}_x^c) + \min\{p_x+c,1\} \Prob(U > \underline{u}_x^c)\\
	&= \max\{p_x - c,0\} F_U \left( F_U^{-1} \left( \frac{\min\{c,1-p_x\}}{\min\{p_x+c,1\} - \max\{p_x-c,0\}} \right) \right) \\
	&\qquad + \min\{p_x+c,1\}\left[ 1- F_U \left( F_U^{-1} \left( \frac{\min\{c,1-p_x\}}{\min\{p_x+c,1\} - \max\{p_x-c,0\}} \right) \right) \right] \\
	&= \frac{(\max \{ p_x- c,0 \}  - \min \{ p_x+c,1 \} )\min \{ c,1-p_x \}}{\min \{ p_x+c,1 \} - \max \{ p_x-c,0 \} } + \min \{ p_x+c,1 \} \\
	&= - \min \{ c,1-p_x \} + \min \{ p_x+c,1 \} \\
	&= - \min \{ c,1-p_x \} + \min \{ c,1-p_x \} + p_x \\
	&= p_x
\end{align*}
and, by a similar proof,
\[
	\int_{\supp(U)} \overline{P}_x^c(u) \; dF_U(u) = p_x.
\]
To see that these conditional probabilities yield our bound functions, first let $\P(X=x \mid U=u) = \underline{P}_x^c(u)$ for all $u$. Then, by lemma \ref{lemma:cdfAsIntegralOfPropensityScore},
\begin{align*}
	\P(U\leq u  \mid X=x)
	&= \int_{-\infty}^u \frac{\underline{P}_x^c(v)}{p_x} \; dF_U(v)\\
	&= \underline{F}_{U \mid X}^c(u \mid x).
\end{align*}
To see this, note that for $u \leq \underline{u}_x^c$,
\begin{align*}
	\int_{-\infty}^u \frac{ \underline{P}_x^c(v)}{p_x} \; dF_U(v)
	&= \frac{\max \{ p_x - c, 0 \}}{p_x} F_U(u) \\
	&= \max \left\{ \left(1-\frac{c}{p_x} \right) F_U(u), 0 \right\}
\end{align*}
while for $u > \underline{u}_x^c$,
\begin{align*}
	\int_{-\infty}^u \frac{ \underline{P}_x^c(v)}{p_x} \; dF_U(v)
	&= \int_{-\infty}^{\infty} \frac{ \underline{P}_x^c(v)}{p_x} \; dF_U(v)  - \int_u^\infty \frac{ \underline{P}_x^c(v)}{p_x} \; dF_U(v)\\
	&= 1 - \min \left\{1 + \frac{c}{p_x},\frac{1}{p_x} \right\} (1-F_U(u)) \\
	&= \max \left\{ 1 - \left( 1 +\frac{c}{p_x} \right) (1-F_U(u)),1 - \frac{1-F_U(u)}{p_x} \right\}\\
	&= \max \left\{ \left( 1 + \frac{c}{p_x} \right) F_U(u) - \frac{c}{p_x}, \frac{F_U(u)- 1}{p_x}+1 \right\}.
\end{align*}
These two final expressions correspond to this lower bound. Similarly, letting $\P(X=x \mid U=u) = \overline{P}_x^c(u)$ for all $u$ yields $\P(U\leq u \mid X=x) = \overline{F}_{U \mid X}^c(u \mid x)$.

Thus we have shown that the bound functions are attainable. That is, equation \eqref{eq:intermediate R cdf bounds} holds with $\epsilon = 0$ or 1. Next consider $\epsilon \in (0,1)$. For this $\epsilon$, we specify the distribution of $X \mid U$ by the conditional probability $\epsilon \underline{P}_x^c(u) + (1-\epsilon) \overline{P}_x^c(u)$. This is a valid conditional probability since it is a linear combination of two terms which are between 0 and 1. Similarly, these two terms are between $p_x-c$ and $p_x+c$ and hence the linear combination is in $[p_x-c,p_x+c]$. Therefore this distribution satisfies $c$-dependence. By linearity of the integral and our results above, it yields
\[
	\int_{\supp(U)} [\epsilon \underline{P}_x^c(u) + (1-\epsilon) \overline{P}_x^c(u) ] \; dF_U(u) = p_x
\]
and hence is consistent with the marginal distribution of $X$. Again by linearity of the integral and our results above,
\[
	\Prob(U \leq u \mid X=x) = \epsilon \underline{F}_{U \mid X}^c(u \mid x) + (1-\epsilon)\overline{F}_{U \mid X}^c(u \mid x)
\]
as needed for equation \eqref{eq:intermediate R cdf bounds}.

\bigskip
\noindent \textbf{Part 5.} Pointwise sharp bounds.
\bigskip

We conclude by noting that the evaluation functional is monotonic (in the sense of first order stochastic dominance), which yields the pointwise bounds on $F_{U \mid X}(u \mid x)$. These are sharp by continuity of this functional, by continuity of the equation \eqref{eq:intermediate R cdf bounds} cdfs in $\epsilon$, and by varying $\epsilon$ over $[0,1]$.
\end{proof}


\begin{proof}[Proof of proposition \ref{prop:condcdf_id}]
First we link the observed data to the unobserved parameters of interest:
\begin{align}
	F_{Y \mid X,W}(y \mid x,w)
		&= \P(Y \leq y  \mid X=x,W=w) \notag \\
		&= \P(Y_x \leq y  \mid X=x,W=w) \notag \\
		&= \P \big(F_{Y_x \mid W}(Y_x \mid W) \leq F_{Y_x \mid W}(y \mid w) \mid X=x,W=w \big)\notag \\
		&= F_{R_x \mid X,W}(F_{Y_x \mid W}(y \mid w) \mid x,w).\label{eq:condcdf_id prop eq1}
\end{align}
The second line follows by definition (equation \ref{eq:potential outcomes}). The third since $F_{Y_x \mid W}(\cdot \mid w)$ is strictly increasing (by A\ref{assn:continuity}.\ref{A1_1}). The fourth line by definition of $R_x$.

The left hand side of equation \eqref{eq:condcdf_id prop eq1} is known, while the argument of the right hand side is our parameter of interest. The main idea of this proof is that theorem \ref{lemma:c-dep_cdf_bounds} yields bounds on $F_{R_x \mid X,W}$, which we then invert to obtain bounds on $F_{Y_x \mid W}$. Showing this---part (1) below---is straightforward.
Several technical difficulties in proving sharpness arise, however, from the inversion step. These issues account for parts (2)--(6) below, as summarized next: (2) Define the functions $F_{Y_x \mid W}^c(\cdot \mid w; \epsilon,\eta)$. (3) Show that these functions are valid cdfs. (4) Show that, for any $\epsilon \in [0,1]$, these functions can be jointly attained. (5) Show that as $\eta$ converges to zero, $F_{Y_x \mid W}(\cdot \mid w; 0,\eta)$ approximates the lower bound from above. And likewise $F_{Y_x \mid W}(\cdot \mid w; 1,\eta)$ approximates the upper bound from below. (6) Show that $F_{Y_x \mid W}(y \mid w; \epsilon,\eta)$ is continuous when viewed as a function of $\epsilon$.

Finally, in part (7), we apply these results to obtain the pointwise bounds with sharp interior.

\bigskip
\noindent \textbf{Part 1}. First we show $F_{Y_x \mid W}(\cdot \mid w) \in \mathcal{F}_{Y_x \mid w}^c$.
\bigskip

By
\[
	\P(X=x \mid Y_x = y, W=w)
	=
	\P \Big( X=x \mid  F_{Y_x \mid W}(Y_x \mid W)=F_{Y_x \mid W}(y \mid w),W=w \Big)
\]
and conditional $c$-dependence (A\ref{assn:cdep}), we have
\[
  	\sup_{r \in [0,1]}|\P(X=x \mid R_x =r,W=w) - \P(X=x \mid W=w)| \leq c.
\]
Conditioning on $W=w$, apply theorem \ref{lemma:c-dep_cdf_bounds} to obtain bounds $\underline{F}_{R_x \mid X,W}^c$ and $\overline{F}_{R_x \mid X,W}^c$ for the distribution of $F_{R_x \mid X,W}$. Recall that $F_{R_x \mid W}(r \mid w) = r$ since $R_x \mid W=w \sim \text{Unif}[0,1]$.

For $\tau \in [0,1]$, let
\begin{align}
	\overline{Q}_x^c(\tau \mid w)
	&= \sup\{u\in[0,1] : \underline{F}_{R_x \mid X,W}^c(u \mid x,w) \leq \tau\}\notag\\
	&= \min\left\{\tau\frac{p_{x \mid w}}{p_{x \mid w}-c}\indicator(p_{x \mid w} > c) +\indicator(p_{x \mid w} \leq c),\frac{p_{x \mid w}\tau + c}{p_{x \mid w} + c},p_{x \mid w}\tau + (1-p_{x \mid w})\right\} \label{eq:Rquantile_lower}
\end{align}
denote the \emph{right}-inverse of $\underline{F}_{R_x \mid X,W}^c(\cdot \mid x,w)$. Similarly, let
\begin{align}
	\underline{Q}_x^c(\tau \mid w)
	&= \inf\{u\in[0,1] : \overline{F}_{R_x \mid X,W}^c(u \mid x,w) \geq \tau\}\notag\\
	&= \max\left\{\tau\frac{p_{x \mid w}}{p_{x \mid w}+c},\frac{p_{x \mid w}\tau - c}{p_{x \mid w} - c}\indicator(p_{x \mid w} > c),p_{x \mid w}\tau\right\} \label{eq:Rquantile_upper}
\end{align}
denote the \emph{left}-inverse of $\overline{F}_{R_x \mid X,W}^c(\cdot \mid x,w)$, respectively.

\bigskip

The bounds hold trivially, by definition, for $y \notin [\underline{y}_x(w),\overline{y}_x(w))$. Suppose $y\in [\underline{y}_x(w),\overline{y}_x(w))$.

\bigskip

\textbf{Lower bound:} By equation \eqref{eq:condcdf_id prop eq1} and since $\overline{F}_{R_x \mid X,W}^c$ is an upper bound,
\[
	F_{Y \mid X,W}(y \mid x,w)
	\leq \overline{F}_{R_x \mid X,W}^c(F_{Y_x \mid W}(y \mid w) \mid x,w).
\]
Hence
\begin{align*}
	\underline{F}^c_{Y_x \mid W}(y \mid w)
		&= \underline{Q}_x^c(F_{Y \mid X,W}(y \mid x,w) \mid w) \\
		&\leq \underline{Q}_x^c(\overline{F}_{R_x \mid X,W}^c(F_{Y_x \mid W}(y \mid w) \mid x,w) \mid w) \\
		&\leq F_{Y_x \mid W}(y \mid w).
\end{align*}
The first line follows by evaluating equation \eqref{eq:Rquantile_upper} at $\tau = F_{Y \mid X,W}(y \mid x,w)$, which yields equation \eqref{eq:cdf lowerbound}. The third line follows by \cite{Vaart2000} lemma 21.1 part (iv), for all $y\in [\underline{y}_x(w),\overline{y}_x(w))$.

\bigskip

\textbf{Upper bound:} We similarly have
\begin{align*}
	\overline{F}^c_{Y_x \mid W}(y \mid w)
		&= \overline{Q}_x^c(F_{Y \mid X,W}(y \mid x,w) \mid w) \\
		&\geq \overline{Q}_x^c(\underline{F}_{R_x \mid X,W}^c(F_{Y_x \mid W}(y \mid w) \mid x,w) \mid w) \\
		&\geq F_{Y_x \mid W}(y \mid w).
\end{align*}
The first line follows by evaluating equation \eqref{eq:Rquantile_lower} at $\tau = F_{Y \mid X,W}(y \mid x,w)$, which yields equation \eqref{eq:cdf upperbound}. The second line follows since, by equation \eqref{eq:condcdf_id prop eq1} and since $\underline{F}_{R_x \mid X,W}^c$ is a lower bound,
\begin{align*}
	F_{Y \mid X,W}(y \mid x,w) &\geq \underline{F}_{R_x \mid X,W}^c(F_{Y_x \mid W}(y \mid w) \mid x,w).
\end{align*}
The third and final line holds for all $y\in [\underline{y}_x(w),\overline{y}_x(w))$, and follows from
\begin{align*}
	\overline{Q}_x^c(\underline{F}_{R_x \mid X,W}^c(\tau \mid x,w) \mid w)
	&= \sup\{u\in[0,1]: \underline{F}_{R_x \mid X,W}^c(u \mid x,w) \leq  \underline{F}_{R_x \mid X,W}^c(\tau \mid x,w)\} \\
	&\geq \tau. 
\end{align*}

Finally, note that without support independence (A\ref{assn:continuity}.\ref{A1_2}) our bound functions can be too tight and thus fail to be valid bounds. 

\bigskip
\noindent \textbf{Part 2.} Next we define the functions $F_{Y_x \mid W}^c(\cdot \mid w; \epsilon,\eta)$.
\bigskip

As in the proof of sharpness for theorem \ref{lemma:c-dep_cdf_bounds}, we will construct specific functions $\Prob(X=x \mid R_x=r, W=w)$. We then solve the equation
\[
	F_{Y \mid X,W}(y \mid x,w)
	= \int_0^{F_{Y_x \mid W}(y \mid w)} \frac{\P(X=x \mid R_x=r,W=w)}{p_{x \mid w}} \; dr
\]
(which holds by equation \eqref{eq:condcdf_id prop eq1} and lemma \ref{lemma:cdfAsIntegralOfPropensityScore}) for $F_{Y_x \mid W}$ to obtain our desired functions.

\bigskip

For $\eta \in (0,\min\{p_{x \mid w},1-p_{x \mid w}\})$ and $c>0$, let
\[
	\underline{P}_x^c(r,w;\eta) = \begin{cases}
	\max\{p_{x \mid w} - c,\eta\} &\text{if } 0 \leq r \leq \underline{u}_x^c(w;\eta)\\
	\min\{p_{x \mid w} + c,1-\eta\} &\text{if } \underline{u}_x^c(w;\eta) < r\leq 1
	\end{cases}
\]
and
\[
	\overline{P}_x^c(r,w;\eta) = \underline{P}_x^c(1-r,w;\eta)
\]
where
\[
	\underline{u}_x^c(w;\eta) = \frac{\min\{c,1-p_{x \mid w}-\eta\}}{\min\{p_{x \mid w} + c,1-\eta\} - \max\{p_{x \mid w} - c,\eta\}}.
\]
For $c=0$ let both of these functions be equal to $p_{x \mid w}$. For $\eta = 0$ these probabilities are simply those used in the proof of sharpness for theorem \ref{lemma:c-dep_cdf_bounds}, conditional on $W$. Using $\eta < \min\{p_{x \mid w},1-p_{x \mid w}\}$, it can be shown that the denominator used to define $\underline{u}_x^c(w;\eta)$ is always nonzero. This constraint on $\eta$ can also be used to show that $\underline{u}_x^c(w;\eta) \in [0,1]$. Also note that $\min\{p_{x \mid w},1-p_{x \mid w}\} > 0$ by A\ref{assn:continuity}.\ref{A1_3}, so that such positive $\eta$'s exist.

Define the function $G(\cdot;\epsilon,\eta): [0,1] \rightarrow [0,1]$ by
\begin{align*}
	G(d;\epsilon,\eta) &= \int_0^{d} \frac{\epsilon \underline{P}_x^c(r,w;\eta) + (1-\epsilon) \overline{P}_x^c(r,w;\eta)}{p_{x \mid w}} \; dr.
\end{align*}
$G$ also depends on $c$, $x$, and $w$ but we suppress this for simplicity. For $\eta \in (0,\min\{p_{x \mid w},1-p_{x \mid w}\})$ and for $\epsilon \in [0,1]$,
\[
	\epsilon \underline{P}_x^c(r,w;\eta) + (1-\epsilon) \overline{P}_x^c(r,w;\eta) > 0
\]
and hence $G(\cdot ; \epsilon,\eta)$ is strictly increasing; obtaining this property is a key reason why we use the $\eta$ variable. $G(\cdot ; \epsilon,\eta)$ is continuous. $G(0; \epsilon,\eta) = 0$. By derivations in part 4 below, $G(1; \epsilon,\eta) = 1$.  Thus $G(\cdot \mid \epsilon,\eta)$ is invertible with a continuous inverse $G^{-1}(\cdot ;\epsilon,\eta)$.

We thus define $F_{Y_x \mid W}^c(y \mid w;\epsilon,\eta)$ as the unique solution $d^*$ to
\[
	F_{Y \mid X,W}(y \mid x,w)
	= G(d^*;\epsilon,\eta).
\]
That is,
\[
	F_{Y_x \mid W}^c(\cdot \mid w;\epsilon,\eta) = G^{-1}(F_{Y \mid X,W}(\cdot \mid x,w);\epsilon,\eta).
\]

\bigskip
\noindent \textbf{Part 3.} Show that these functions are valid cdfs.
\bigskip

$F_{Y \mid X,W}(\cdot \mid x,w)$ is continuous and strictly increasing by A\ref{assn:continuity}.\ref{A1_1}. Hence, for $\eta \in (0,\min\{p_{x \mid w},1-p_{x \mid w}\})$, $F_{Y_x \mid W}^c(\cdot \mid w;\epsilon,\eta)$ is the composition of two continuous and strictly increasing functions, and hence itself is continuous and strictly increasing. Since $G^{-1}(0; \epsilon,\eta) = 0$ and $G^{-1}(1; \epsilon,\eta) = 1$,  $F_{Y_x \mid W}^c(\cdot \mid w;\epsilon,\eta)$ equals zero when $y = \underline{y}_x(w)$ and equals one when $y = \overline{y}_x(w)$. Therefore it is a valid cdf.


\bigskip
\noindent \textbf{Part 4.} Show that these functions can be jointly obtained.
\bigskip

To show this, we exhibit conditional probabilities $\P(X=x \mid R_1 = r,W=w)$ and $\P(X=x \mid R_0 = r,W=w)$ such that
\begin{enumerate}
\item They are consistent with the cdfs $F_{Y_1 \mid W}^c(y \mid w; \epsilon,\eta)$ and $F_{Y_0 \mid W}^c(y \mid w; 1-\epsilon,\eta)$ and with the observed cdf $F_{Y \mid X,W}$.

\item They are consistent with $p_{x \mid w}$.

\item They satisfy conditional $c$-dependence (A\ref{assn:cdep}) and A\ref{assn:continuity}.\ref{A1_1}.
\end{enumerate}
Specifically, for $\eta \in (0,\min\{p_{x \mid w},1-p_{x \mid w}\})$ we let
\[
	\P(X=x \mid R_1 = r,W=w)
	= \epsilon \underline{P}_x^c(r,w;\eta) + (1-\epsilon) \overline{P}_x^c(r,w;\eta)
\]
and
\[
	\P(X=x \mid R_0 = r,W=w)
	= (1-\epsilon) \underline{P}_x^c(r,w;\eta) + \epsilon \overline{P}_x^c(r,w;\eta).
\]
We now show that properties 1, 2, and 3 above hold for this choice:
\begin{enumerate}

\item This follows immediately by definition of the cdfs $F_{Y_x \mid W}^c(y \mid w; \epsilon,\eta)$ in part 2 above.

\item Recall that $R_x \mid W=w \sim \text{Unif}[0,1]$. Integrating against this marginal distribution yields
\begin{align*}
	&\hspace{4mm} \int_0^1 \underline{P}_x^c(r,w;\eta) \; dr \\
	&= \max\{p_{x \mid w}-c,\eta\} \underline{u}_x^c(w;\eta) + \min\{p_{x \mid w}+c,1-\eta\}(1-\underline{u}_x^c(w;\eta))\\
	&= \min\{p_{x \mid w}+c,1-\eta\} \\
	&\quad + \frac{\min\{c,1-p_{x \mid w}-\eta\}}{\min\{p_{x \mid w} + c,1-\eta\} - \max\{p_{x \mid w} - c,\eta\}}(\max\{p_{x \mid w} - c,\eta\} - \min\{p_{x \mid w}+c,1-\eta\})\\
	&= \min\{p_{x \mid w}+c,1-\eta\} - \min\{c,1-p_{x \mid w}-\eta\}\\
	&= p_{x \mid w} + \min\{c,1-p_{x \mid w}-\eta\} - \min\{c,1-p_{x \mid w}-\eta\} \\
	&= p_{x \mid w}
\end{align*}
similarly to derivations in part 4 of the proof of theorem \ref{lemma:c-dep_cdf_bounds}, and
\begin{align*}
	\int_0^1 \overline{P}_x^c(r,w;\eta) \; dr
	&= \int_0^1 \underline{P}_x^c(1-r,w;\eta) \; dr \\
	&= p_{x \mid w}.
\end{align*}
Hence
\[
	\int_0^1 [ \epsilon \underline{P}_x^c(r,w;\eta) + (1-\epsilon) \overline{P}_x^c(r,w;\eta)] \; dr = p_{x \mid w}.
\]

\item Conditional $c$-dependence holds by construction. A\ref{assn:continuity}.\ref{A1_1} holds as shown in part 3 above.
\end{enumerate}
Finally, note that we do not need to specify the joint distribution of $Y_1$ and $Y_0$ (or of $R_1$ and $R_0$) since the data only constraint the marginal distributions; any choice of copula is consistent with the data.

\bigskip
\noindent \textbf{Part 5}. Show that these functions monotonically approximate the bound functions.
\bigskip

We first derive explicit expressions for our approximating functions. Begin with the upper bound, which corresponds to $\epsilon = 1$. We have
\begin{align*}
	&\hspace{4mm} F_{Y \mid X,W}(y \mid x,w)p_{x \mid w} \\
	&= \int_0^{F_{Y_x \mid W}^c(y \mid w;1,\eta)} \underline{P}_x^c(r,w;\eta) \; dr \\
	&= \begin{cases}
	\max\{ p_{x \mid w} - c,\eta\} F_{Y_x \mid W}^c(y \mid w;1,\eta)
		&\text{if } F_{Y_x \mid W}^c(y \mid w;1,\eta) \leq \underline{u}_x^c(w;\eta) \\
	p_{x \mid w}- [1-F_{Y_x \mid W}^c(y \mid w;1,\eta)] \min \{ p_{x \mid w} + c,1-\eta \}
		&\text{if } F_{Y_x \mid W}^c(y \mid w;1,\eta) > \underline{u}_x^c(w;\eta)
	\end{cases}
\end{align*}
The first equality follows by definition of $F_{Y_x \mid W}(y \mid w;\epsilon,\eta)$. Solving for our approximating functions yields
\begin{align}\label{eq:condcdf_int_epsilon_2}
	&\hspace{4mm} F_{Y_x \mid W}^c(y \mid w;1,\eta) \notag \\
	&= \begin{cases}
	 \dfrac{F_{Y \mid X,W}(y \mid x,w) p_{x \mid w} }{\max \{p_{x \mid w} - c,\eta \}}
	 	&\text{if this expression is $\leq \underline{u}_x^c(w;\eta)$}
	 \\
	 \dfrac{F_{Y \mid X,W}(y \mid x,w) p_{x \mid w} + \min \{c,1 - p_{x \mid w} - \eta \}}{\min \{ p_{x \mid w} + c,1-\eta \}}
	 	&\text{if this expression is $> \underline{u}_x^c(w;\eta)$}
	 \end{cases} \\
	 &= \min\left\{\frac{p_{x \mid w}F_{Y \mid X,W}(y \mid x,w)}{\max\{p_{x \mid w} - c,\eta\}},\frac{F_{Y \mid X,W}(y \mid x,w)p_{x \mid w} + \min\{c,1-\eta - p_{x \mid w}\}}{\min\{p_{x \mid w}+c,1-\eta\}}\right\} \notag \\
	&= \min\left\{\frac{p_{x \mid w}F_{Y \mid X,W}(y \mid x,w)}{\max\{p_{x \mid w} - c,\eta\}},\frac{F_{Y \mid X,W}(y \mid x,w)p_{x \mid w} + c}{\min\{p_{x \mid w}+c,1-\eta\}},\frac{F_{Y \mid X,W}(y \mid x,w)p_{x \mid w} + (1-\eta) - p_{x \mid w}}{\min\{p_{x \mid w}+c,1-\eta\}}\right\}. \notag
\end{align}
The last line obtains by extracting the minimum in the numerator.
By similar calculations, for the lower bound ($\epsilon = 0$) we obtain
\begin{align}\label{eq:condcdf_int_epsilon_1}
	&F_{Y_x \mid W}^c(y \mid w;0,\eta) \\
	&= \max\left\{\frac{p_{x \mid w}F_{Y \mid X,W}(y \mid x,w)}{p_{x \mid w} + c},\frac{F_{Y \mid X,W}(y \mid x,w)p_{x \mid w} - \min\{c, p_{x \mid w}-\eta\}}{\max\{p_{x \mid w}-c,\eta\}},\frac{p_{x \mid w}F_{Y \mid X,W}(y \mid x,w)}{1-\eta}\right\}. \notag
\end{align}

Consider our approximation to the lower bound, $F_{Y_x \mid W}^c(y \mid w;0,\eta)$. The first piece of the maximum does not depend on $\eta$, and corresponds to the first piece in the limit function $\underline{F}_{Y_x \mid W}^c(y \mid w)$, equation \eqref{eq:cdf lowerbound}. For the third piece, as $\eta \searrow 0$,
\[
	\frac{p_{x \mid w}F_{Y \mid X,W}(y \mid x,w)}{1-\eta}
	\searrow
	p_{x \mid w} F_{Y \mid X,W}(y \mid x,w).
\]
This limit is the third piece of equation \eqref{eq:cdf lowerbound}. Finally consider the middle piece of our approximation function. If $p_{x \mid w} > c$ then for any $\eta \in (0,p_{x \mid w} - c)$, this middle piece exactly equals the middle piece of equation \eqref{eq:cdf lowerbound}. If $p_{x \mid w} \leq c$ then, as $\eta \searrow 0$,
\begin{align*}
	\frac{F_{Y \mid X,W}(y \mid x,w)p_{x \mid w} - \min\{c, p_{x \mid w}-\eta\}}{\max\{p_{x \mid w}-c,\eta\}}
	&= \frac{F_{Y \mid X,W}(y \mid x,w)p_{x \mid w} - p_{x \mid w} + \eta}{\eta} \\
	&= \frac{[ F_{Y \mid X,W}(y \mid x,w) - 1] p_{x \mid w} }{\eta} + 1 \\
	&\searrow -\infty.
\end{align*}
Hence this term disappears from the overall maximum. Thus we have shown that, for any $y \in \R$ and $w \in \supp(W)$,
\[
	F_{Y_x \mid W}^c(y \mid w;0,\eta) \searrow \underline{F}_{Y_x \mid W}^c(y \mid w)
\]
as $\eta \searrow 0$. A similar argument, based on our explicit expression for $F_{Y_x \mid W}^c(y \mid w;1,\eta)$, shows that
\[
	F_{Y_x \mid W}^c(y \mid w;1,\eta) \nearrow \overline{F}_{Y_x \mid W}^c(y \mid w)
\]
as $\eta \searrow 0$.

\bigskip
\noindent \textbf{Part 6}. Show that $F_{Y_x \mid W}^c(y \mid w; \cdot, \eta)$ is continuous on $[0,1]$.
\bigskip


Let $\{ \epsilon_n \} \subset [0,1]$ be a sequence converging to $\epsilon \in [0,1]$. Recall from part 2 that, by definition of  $F_{Y_x \mid W}^c(y \mid w; \epsilon, \eta)$,
\[
	F_{Y \mid X,W}(y \mid x,w) = G(F_{Y_x \mid W}^c(y \mid w;\epsilon_n,\eta); \epsilon_n, \eta)
\]
for all $n$. Taking limits as $n \rightarrow \infty$ on both sides yields
\[
	F_{Y \mid X,W}(y \mid x,w) = G \left( \lim_{n \rightarrow \infty} F_{Y_x \mid W}^c(y \mid w;\epsilon_n,\eta); \epsilon, \eta \right).
\]
Continuity of $G(\cdot ; \cdot, \eta)$ allows us to pass the limit inside. Finally, inverting $G(\cdot ; \epsilon,\eta)$ yields
\begin{align*}
	\lim_{n \rightarrow \infty} F_{Y_x \mid W}^c(y \mid w;\epsilon_n,\eta)
	&= G^{-1}(F_{Y \mid X,W}(y \mid x,w); \epsilon, \eta) \\
	&= F_{Y_x \mid W}^c(y \mid w;\epsilon,\eta)
\end{align*}
as desired.

\bigskip
\noindent \textbf{Part 7}. Apply these results to obtain the pointwise bounds.
\bigskip

Fix $y \in \R$. Let $e \in (\underline{F}_{Y_x \mid W}^c(y \mid w),\overline{F}_{Y_x \mid W}^c(y \mid w))$. By part 5, there exists an $\eta^* > 0$ such that $e \in [F_{Y_x \mid W}^c(y \mid w; 0,\eta^*), F_{Y_x \mid W}^c(y \mid w; 1,\eta^*)]$. By part 6, $F_{Y_x \mid W}^c(y \mid w; \cdot,\eta^*)$ is continuous on $[0,1]$. Hence by the intermediate value theorem, there exists an $\epsilon^* \in [0,1]$ such that $e = F_{Y_x \mid W}^c(y \mid w; \epsilon^*,\eta^*)$. Thus the value $e$ is attainable.

\end{proof}


\begin{proof}[Proof of proposition \ref{prop:CQTE_bounds}]
Recall equation \eqref{eq:condcdf_id prop eq1},
\[
	F_{Y \mid X,W}(y \mid x,w)
	= F_{R_x \mid X,W}(F_{Y_x \mid W}(y \mid w) \mid x,w).
\]
By invertibility of $F_{Y \mid X,W}(\cdot \mid x,w)$,
\[
	y = Q_{Y \mid X,W}[ F_{R_x \mid X,W}(F_{Y_x \mid W}(y \mid w) \mid x,w) \mid x,w].
\]
By invertibility of $F_{Y_x \mid W}(\cdot \mid w)$, evaluate this equation at $y = Q_{Y_x \mid W}(\tau \mid w)$ to get
\[
	Q_{Y_x \mid W}(\tau \mid w)
	= Q_{Y \mid X,W}(F_{R_x \mid X,W}(\tau \mid x,w) \mid x,w).
\]
Let $\underline{F}_{R_x \mid X,W}^c(\tau \mid x,w)$ and $\overline{F}_{R_x \mid X,W}^c(\tau \mid x,w)$ denote the bounds on $F_{R_x \mid X,W}$ obtained by applying theorem \ref{lemma:c-dep_cdf_bounds} conditional on $W$ and using $R_x \mid W=w \sim \text{Unif}[0,1]$. This latter fact implies that these bounds are the same for $R_1$ and $R_0$. Hence we let $\underline{F}_{R \mid X,W}^c(\tau \mid x,w)$ and $\overline{F}_{R \mid X,W}^c(\tau \mid x,w)$ denote the common bounds. Substituting these bounds into the equation above yields the bounds \eqref{eq:quantile upperbound} and \eqref{eq:quantile lowerbound}. Taking the smallest and largest differences of these bounds yields \eqref{eq:CQTE_bounds}.

Here we see that proving proposition \ref{prop:CQTE_bounds} is simpler than proposition \ref{prop:condcdf_id} since we do not need to invert the $F_{R_x \mid X,W}$ bounds. To complete the proof, we prove sharpness using the same construction as in that proposition. Specifically, let $\eta \in (0,\min\{ p_{x \mid w}, 1 - p_{x \mid w} \})$. Then choose
\[
	\P(X=x \mid R_1 = r,W=w) = \epsilon \underline{P}_x^c(r,w;\eta) + (1-\epsilon) \overline{P}_x^c(r,w;\eta)
\]
and
\[
	\P(X=x \mid R_0 = r,W=w) = (1-\epsilon) \underline{P}_x^c(r,w;\eta) + \epsilon \overline{P}_x^c(r,w;\eta).
\]
These are attainable as in part 4 of the proof of proposition \ref{prop:condcdf_id}. By lemma \ref{lemma:cdfAsIntegralOfPropensityScore}, we can convert these to conditional probabilities to the cdfs
\[
	\widetilde{F}_{R_1 \mid X,W}^c(r \mid x,w;\epsilon,\eta) = G(r; \epsilon,\eta)
	\qquad \text{and} \qquad
	\widetilde{F}_{R_0 \mid X,W}^c(r \mid x,w;\epsilon,\eta) = G(r; 1-\epsilon,\eta).
\]
Using the properties of $G$, as in part 2 of the proof of proposition \ref{prop:condcdf_id}, we see that these are valid cdfs on $[0,1]$ which are strictly increasing in $r$ and continuous in $r$.

Setting $\eta = 0$ yields
\[
	\widetilde{F}_{R_x \mid X,W}^c(r \mid x,w; 1,0) = \underline{F}_{R \mid X,W}^c(r \mid x,w)
	\qquad \text{and} \qquad
	\widetilde{F}_{R_x \mid X,W}^c(r \mid x,w; 0,0) = \overline{F}_{R \mid X,W}^c(r \mid x,w),	
\]
which are not always strictly increasing. Hence, when substituted into $Q_{Y \mid X,W}(\cdot \mid x,w)$, the corresponding conditional quantile functions violate A\ref{assn:continuity}.\ref{A1_1}. As in part 5 of the proof of proposition \ref{prop:condcdf_id}, however, we can monotonically approximate these bound functions as $\eta \searrow 0$.

Substituting our constructed cdfs into our equation for $Q_{Y_x \mid W}(\tau \mid w)$ above and taking differences yields
\begin{align}\label{eq:CQTEsharpnessEq}
	Q_{Y_1 \mid W}(&\tau \mid w) - Q_{Y_0 \mid W}(\tau \mid w) \\
	&= Q_{Y \mid X,W}(\widetilde{F}_{R_1 \mid X,W}(\tau \mid 1,w;\epsilon,\eta) \mid 1,w) - Q_{Y \mid X,W}\left(\widetilde{F}_{R_0 \mid X,W}(\tau \mid 0,w;1-\epsilon,\eta) \mid 0,w\right). \notag
\end{align}
The final step now follows as in part 7 of the proof of proposition \ref{prop:condcdf_id}: Let $e \in (\underline{\text{CQTE}}^c(\tau \mid w), \overline{\text{CQTE}}^c(\tau \mid w) )$. By monotone approximation, there exists an $\eta^* > 0$ such that $e$ is above equation \eqref{eq:CQTEsharpnessEq} evaluated at $(\epsilon,\eta) = (1,\eta^*)$ and is below that equation evaluated at $(\epsilon,\eta) = (0,\eta^*)$. Next note that equation \eqref{eq:CQTEsharpnessEq} is continuous in $\epsilon$ since $Q_{Y \mid X,W}(\cdot \mid x,w)$ is continuous and by continuity of $\widetilde{F}_{R_x \mid X,W}(\tau \mid x,w;\cdot,\eta)$ on $[0,1]$. Thus, by the intermediate value theorem, there exists an $\epsilon^* \in [0,1]$ such that equation \eqref{eq:CQTEsharpnessEq} evaluated at $(\epsilon,\eta) = (\epsilon^*,\eta^*)$ yields $e$.

\end{proof}


\begin{proof}[Proof of corollary \ref{corollary:CATE_ATE_bounds}]
\hfill
\begin{enumerate}
\item We obtain the CATE bounds by integrating the CQTE bounds in proposition \ref{prop:CQTE_bounds} over $\tau$. To show sharpness, we prove two results:
\begin{enumerate}
\item For any $\eta \in (0,\min\{ p_{x \mid w}, 1 - p_{x \mid w})$,
\[
	\int_0^1 Q_{Y \mid X,W}(\widetilde{F}_{R_x \mid X,W}^c(\tau \mid x,w;\epsilon,\eta) \mid x,w) \; d\tau
\]
is continuous in $\epsilon$ on $[0,1]$.

\item As $\eta \searrow 0$,
\[
	\int_0^1 Q_{Y \mid X,W}(\widetilde{F}_{R_x \mid X,W}^c(\tau \mid x,w;1,\eta) \mid x,w) \; d\tau
	\nearrow
	\int_0^1 Q_{Y \mid X,W}(\overline{F}_{R \mid X,W}^c(\tau \mid  x,w) \mid x,w) \; d\tau
\]
and
\[
	\int_0^1 Q_{Y \mid X,W}(\widetilde{F}_{R_x \mid X,W}^c(\tau \mid x,w;0,\eta) \mid x,w) \; d\tau
	\searrow
	\int_0^1 Q_{Y \mid X,W}(\underline{F}_{R \mid X,W}^c(\tau \mid  x,w) \mid x,w) \; d\tau.
\]
\end{enumerate}
We then use the same argument as in part 7 of the proof of proposition \ref{prop:condcdf_id}.

\bigskip

\noindent \textbf{Proof of (a)}.
Fix $\eta \in (0,\min\{ p_{x \mid w}, 1 - p_{x \mid w})$. First, notice that
\[
	\widetilde{F}_{R_x \mid X,W}^c(\tau \mid x,w;\epsilon,\eta)
	\in
	\left[ \widetilde{F}_{R_x \mid X,W}^c(\tau \mid x,w;1,\eta),
	\widetilde{F}_{R_x \mid X,W}^c(\tau \mid x,w;0,\eta) \right].
\]
Hence
\begin{multline*}
	Q_{Y \mid X,W}(\widetilde{F}_{R_x \mid X,W}^c(\tau \mid x,w;\epsilon,\eta) \mid x,w) \\
	\in
	\left[
		Q_{Y \mid X,W}(\widetilde{F}_{R_x \mid X,W}^c(\tau \mid x,w;1,\eta) \mid x,w),
		Q_{Y \mid X,W}(\widetilde{F}_{R_x \mid X,W}^c(\tau \mid x,w;0,\eta) \mid x,w)
	\right]
\end{multline*}
since $Q_{Y \mid X,W}(\cdot \mid x,w)$ is strictly increasing. Therefore,
\begin{align*}
	|Q_{Y \mid X,W}(\widetilde{F}_{R_x \mid X,W}^c(\tau \mid x,w;\epsilon,\eta) \mid x,w)|
		&\leq |Q_{Y \mid X,W}(\widetilde{F}_{R_x \mid X,W}^c(\tau \mid x,w;1,\eta) \mid x,w)| \\
		&\qquad + | Q_{Y \mid X,W}(\widetilde{F}_{R_x \mid X,W}^c(\tau \mid x,w;0,\eta) \mid x,w)|.
\end{align*}
Because of these bounds, it suffices to check that, for the endpoints $\epsilon=1$ and 0, the integral
\[
	\int_0^1 |Q_{Y \mid X,W}(\widetilde{F}_{R_x \mid X,W}^c(\tau \mid x,w;\epsilon,\eta) \mid x,w)| \; d\tau
\]
is finite. This boundedness then allows us to use the dominated convergence theorem to pass $\epsilon$ limits inside the integral. Continuity of the integral then follows since $Q_{Y \mid X,W}(\cdot \mid x,w)$ and $\widetilde{F}_{R_x \mid X,W}^c(\tau \mid x,w;\cdot,\eta)$ are continuous.

We finish this part by showing that those two integrals are finite. First consider the $\epsilon = 1$ case. Using our definitions (part 2 of the proof of proposition \ref{prop:condcdf_id}), computations similar to part 5 of the proof of proposition \ref{prop:condcdf_id} yield
\begin{align*}
	\widetilde{F}_{R_x \mid X,W}^c(\tau \mid x,w;1,\eta)
	&=
	\begin{cases}
		\max \left\{ \left(1 - \frac{c}{p_{x \mid w}} \right), \frac{\eta}{p_{x \mid w}} \right\} \tau
			&\text{ if } \tau \leq \underline{u}_x^c(w;\eta) \\
		1- \min \left\{ 1 + \frac{c}{p_{x \mid w}},\frac{1-\eta}{p_{x \mid w}} \right\} (1-\tau)
			&\text{ if } \tau \geq \underline{u}_x^c(w;\eta)
	\end{cases} \\
	&\equiv
	\begin{cases}
		A \tau
			&\text{ if } \tau \leq \underline{u}_x^c(w;\eta) \\
		1- B (1-\tau)
			&\text{ if } \tau \geq \underline{u}_x^c(w;\eta).
	\end{cases}
\end{align*}


Hence
\begin{align*}
	&\hspace{4mm} \int_0^1 |Q_{Y \mid X,W}( \widetilde{F}_{R_x \mid X,W}^c(\tau \mid x,w;1,\eta) \mid x,w)| \; d\tau \\
	&= \int_0^{\underline{u}_x^c(w;\eta)}
		|Q_{Y \mid X,W}\left( A \tau  \mid x,w \right)| d\tau
	+ \int_{\underline{u}_x^c(w;\eta)}^1
		|Q_{Y \mid X,W} \left( 1- B (1-\tau) \mid x,w \right)| d\tau \\
	&= \frac{1}{A} \int_0^{A \underline{u}_x^c(w;\eta) } |Q_{Y \mid X,W}(v \mid x,w)| \; dv
	 + \frac{1}{B} \int^1_{ 1- B (1-\underline{u}_x^c(w;\eta))} |Q_{Y \mid X,W}(v \mid x,w)| \; dv \\
	 &\leq \frac{1}{A} \int_0^1 |Q_{Y \mid X,W}(v \mid x,w)|\; dv
	 + \frac{1}{B} \int_0^1 |Q_{Y \mid X,W}(v \mid x,w)|; dv \\
	 &= (A^{-1} + B^{-1}) \cdot \Exp(|Y| \mid X=x,W=w) \\
	 &< \infty.
\end{align*}
The second equality follows by changing variables. The last line follows by assumption. Finally, the third line holds as follows:
\begin{itemize}
\item $\eta < p_{x \mid w}$ implies
\[
	A = \max \left\{ \left(1 - \frac{c}{p_{x \mid w}} \right), \frac{\eta}{p_{x \mid w}} \right\} \in (0,1]
\]
and hence $0 < A \underline{u}_x^c(w;\eta) \leq \underline{u}_x^c(w;\eta) \leq 1$.

\item Recalling that $\underline{u}_x^c(w;\eta) \in [0,1]$, we have
\[
	1 - B(1-\underline{u}_x^c(w;\eta)) = 1- \min \left\{ 1 + \frac{c}{p_{x \mid w}},\frac{1-\eta}{p_{x \mid w}} \right\} (1-\underline{u}_x^c(w;\eta)) \leq 1
\]
and
\[
	1 - B(1-\underline{u}_x^c(w;\eta))
	= \max \left\{ \frac{c}{p_{x \mid w}},\frac{p_{x \mid w} + \eta - 1}{p_{x \mid w}} \right\}
	+ \underline{u}_x^c(w;\eta) \min \left\{ \left(1 + \frac{c}{p_{x \mid w}} \right),\frac{1-\eta}{p_{x \mid w}} \right\}
	\geq 0.
\]
Finally, note that $B > 0$.
\end{itemize}
The proof for $\epsilon = 0$ is similar.

\bigskip

\noindent \textbf{Proof of (b)}. From the proof of (a),
\[
	\int_0^1 Q_{Y \mid X,W}(\widetilde{F}_{R_x \mid X,W}^c(\tau \mid x,w;1,\eta) \mid x,w) \; d\tau < \infty
\]
and
\[
	\int_0^1 Q_{Y \mid X,W}(\widetilde{F}_{R_x \mid X,W}^c(\tau \mid x,w;0,\eta) \mid x,w) \; d\tau > -\infty
\] 
for any $\eta \in (0,\min \{ p_{x \mid w}, 1 - p_{x \mid w} \})$. The integrands converge pointwise monotonically to the limit functions $\overline{F}_{R \mid X,W}^c(\tau \mid x,w)$ and $\underline{F}_{R \mid X,W}^c(\tau \mid x,w)$, respectively, as $\eta \searrow 0$. The result now follows by the monotone convergence theorem (e.g., theorem 4.3.2 on page 131 of \citealt{Dudley2002}).

\item The ATE bounds follow by integrating the CATE bounds over the marginal distribution of $W$. Next we show sharpness. We consider the case $c > 0$, so that our bounds have nonempty interior. The case $c=0$ is trivial as this is the usual unconfoundedness result.

Let
\[
	C(w) = \Exp(Y \mid X=1,W=w) - \Exp(Y \mid X=0, W=w).
\]
$C(w)$ is in the identified set for $\text{CATE}(w)$. Moreover, $\Exp( | C(W) | ) < \infty$ by assumption. Define the functional $H$ on the set of functions $f$ satisfying $\Exp(| f(W) | ) < \infty$ by
\[
	H(f) = \int_{\supp(W)} f(w) \; dF_W(w).
\]
$H$ is continuous in the sup-norm, monotonic in the pointwise order on functions, and $\text{ATE} = H( \text{CATE}(\cdot) )$, assuming this mean exists. $H(C)$ is finite. Since $c > 0$, $H(C) \in (\underline{\text{ATE}}^c,\overline{\text{ATE}}^c )$.

Let $e \in (\underline{\text{ATE}}^c,\overline{\text{ATE}}^c )$. We want to find a function $f(\cdot)$ such that
\[
	f(w) \in \big( \underline{\text{CATE}}^c(w),\overline{\text{CATE}}^c(w) \big)
\]
and $H(f) = e$. First suppose $e - H(C) > 0$. Then we're looking for a function $f$ that is sufficiently above $C$, but does not violate the CATE bounds. Define
\[
	f_M(w) = \min \{ \overline{\text{CATE}}^c(w), C(w) + M \}
\]
for $M > 0$. Then $H(f_0) = H(C)$. Moreover, for each $w \in \supp(W)$, $f_M(w) \nearrow \overline{\text{CATE}}^c(w)$ as $M \nearrow \infty$. Thus $H(f_M) \nearrow H(\overline{\text{CATE}}^c(\cdot)) = \overline{\text{ATE}}^c$ by the monotone convergence theorem. Thus, since $e < \overline{\text{ATE}}^c$, there exists an $\overline{M}$ such that $H(f_{\overline{M}}) > e$. Finally, we note that $f_M(w)$ is continuous in $M$ and hence $H(f_M)$ is continuous in $M$. Thus by the intermediate value theorem there exists an $0 < M^* < \overline{M}$ such that $H(f_{M^*}) = e$. Thus $e$ is attainable. A similar argument applies if $e - H(C) \leq 0$.
\end{enumerate}
\end{proof}


\begin{proof}[Proof of corollary \ref{lemma:margCDF_QTE_bounds}]
\hfill
\begin{enumerate}
\item We obtain bounds on $F_{Y_x}$ by integrating the bounds in proposition \ref{prop:condcdf_id} with respect to the marginal distribution of $W$. By the dominated convergence theorem,
\begin{align*}
	\lim_{\eta \searrow 0}\Exp[F_{Y_x \mid W}^c(y \mid W;0,\eta)]
	&= \Exp \left( \lim_{\eta \searrow 0} F_{Y_x \mid W}^c(y \mid W;0,\eta) \right) \\
	&= \underline{F}_{Y_x}(y).
\end{align*}
The second line follows by proposition \ref{prop:condcdf_id}. A similar argument applies for the upper bound. Also by the dominated convergence theorem, $\Exp [ F_{Y_x \mid W}^c(y \mid W; \epsilon, \eta) ]$ is continuous in $\epsilon \in [0,1]$, for all $\eta \in (0, \min \{ p_{x \mid w}, 1 - p_{x \mid w} \})$. The argument now proceeds as in part 7 of the proof of proposition \ref{prop:condcdf_id}.

\item The bounds on $Q_{Y_x}(\tau)$ are just the inverse of the bounds on $F_{Y_x}(\tau)$ from part 1. The cdf $\Exp[F_{Y_x \mid W}^c(y \mid W;0,\eta)]$ converges pointwise to $\underline{F}_{Y_x}(y)$ from above by part 1. Therefore, for any sequence $\eta_n \geq \eta_{n+1} \geq \ldots >0$,
\begin{align*}
	\inf \big\{y \in \R : \Exp[F_{Y_x \mid W}^c(y \mid W;0,\eta_n)] \geq \tau \big\} &\leq \inf \big\{ y \in \R : \Exp[F_{Y_x \mid W}^c(y \mid W;0,\eta_{n+1})] \geq \tau \big\} \leq \ldots
\end{align*}
This sequence thus converges monotonically to
\[
	\inf \{ y \in \R : \underline{F}_{Y_x}(y)\geq \tau\} = \overline{Q}_{Y_x}(\tau),
\]
which may be $+\infty$. A similar argument applies for the lower bound. Moreover, for any fixed $\eta \in (0, \min \{ p_{x \mid w}, 1 - p_{x \mid w} \})$, the inverse of $\Exp [ F_{Y_x \mid W}^c(\cdot \mid W; \epsilon, \eta) ]$ at $\tau$ is continuous in $\epsilon$ over $[0,1]$ (by a proof similar to part 6 of the proof of proposition \ref{prop:condcdf_id}). The argument now proceeds as in part 7 of the proof of proposition \ref{prop:condcdf_id}.

\item This result for $\text{QTE}(\tau)$ follows by combining the bounds on $Q_{Y_1}(\tau)$ and $Q_{Y_0}(\tau)$ from part 2, analogously to equation \eqref{eq:CQTE_bounds}, noting that the joint identified set is the product of the marginal identified sets.
\end{enumerate}
\end{proof}


\begin{proof}[Proof of proposition \ref{prop:ATT bounds}]
By the law of total probability and the definition of ATT,
\begin{equation}\label{eq:ATT proof eq 1}
	\text{ATT}
	= \Exp(Y \mid X=1) - \frac{\Exp(Y_0) - p_0\Exp(Y \mid X=0)}{p_1}.
\end{equation}
Using assumptions A\ref{assn:continuity}.\ref{A1_1}, A\ref{assn:continuity}.\ref{A1_2}, A\ref{assn:continuity}.\ref{A1_3}$^\prime$, and A\ref{assn:cdep}$^\prime$, a proof similar to that of corollary \ref{corollary:CATE_ATE_bounds} shows that the set $(\underline{E}_0^c,\overline{E}_0^c)$ is sharp. Substituting these bounds in equation \eqref{eq:ATT proof eq 1} completes the proof. 
\end{proof}

\begin{proof}[Proof of proposition \ref{prop:binary_outcomes}]
First, we show that $[\underline{P}_x^c(1 \mid w),\overline{P}_x^c(1 \mid w)]$ are bounds for $\P(Y_x=1 \mid W=w)$. We then show sharpness of the interior.

When $c < p_{x \mid w}$,
\begin{align*}
	\P(Y_x=1 \mid W=w)
		&= \frac{\Prob(Y_x=1,W=w)}{\Prob(W=w)} \frac{\Prob(Y_x=1, X=x, W=w)}{\Prob(Y_x=1, X=x, W=w)} \frac{\Prob(X=x,W=w)}{\Prob(X=x,W=w)} \\
		&= \frac{p_{1 \mid x,w}p_{x \mid w}}{\P(X=x \mid Y_x=1,W=w)} \\
		&\leq \frac{p_{1 \mid x,w}p_{x \mid w}}{p_{x \mid w} - c}.
\end{align*}
The last line follows by conditional $c$-dependence. Also,
\begin{align*}
	\P(Y_x=1 \mid W=w)
		&= \P(Y=1 \mid X=x,W=w)p_{x \mid w} + \P(Y_x=1 \mid X=1-x,W=w)(1-p_{x \mid w}) \\
		&\leq p_{1 \mid x,w}p_{x \mid w} + 1 \cdot (1-p_{x \mid w}).
\end{align*}
Combining these two inequalities yields $\P(Y_x=1 \mid W=w) \leq \overline{P}_x^c(1 \mid w)$. 

Similarly,
\begin{align*}
	\P(Y_x=1 \mid W=w)
		&=  \frac{p_{1 \mid x,w}p_{x \mid w}}{\P(X=x \mid Y_x=1,W=w)} \\
		&\geq \frac{p_{1 \mid x,w}p_{x \mid w}}{\min\{p_{x \mid w}+c,1\}}
\end{align*}
by conditional $c$-dependence and by $\P(X=x \mid Y_x=1,W=w) \leq 1$.

To show sharpness of the interior, fix $p^*\in (\underline{P}_x^c(1 \mid w),\overline{P}_x^c(1 \mid w))$. We want to exhibit a joint distribution of $(Y_x,X,W)$ consistent with the data, our assumptions, and which yields this element $p^*$. Since the distribution of $(X,W)$ is observed, we only need to specify the distribution of $Y_x \mid X,W$. Since $Y_x$ and $X$ are binary, there are only two parameters to this distribution to specify, for each $w \in \supp(W)$. The first is $\Prob(Y_x=1 \mid X=x,W=w) = \Prob(Y=1 \mid X=x,W=w)$, which is point identified from the data. Hence the only unknown parameter is the value $\Prob(Y_x=1 \mid X=1-x,W=w)$, which must be chosen such that
\begin{enumerate}
\item $\Prob(Y_x=1 \mid W=w) = p^*$,

\item$\Prob(Y_x=1 \mid X=1-x,W=w) \in (0,1)$, and

\item conditional $c$-dependence is satisfied.
\end{enumerate}

\noindent \textbf{Proof of 1.} Choose
\[
	\Prob(Y_x=1 \mid X=1-x,W=w) = \frac{p^* - p_{1 \mid x,w}p_{x \mid w}}{1-p_{x \mid w}}.
\]
Then,
\begin{align*}
	\P(Y_x=1 \mid W=w)
		&= \P(Y=1 \mid X=x,W=w)p_{x \mid w} + \Prob(Y_x \mid X=1-x,W=w)(1-p_{x \mid w}) \\
		&= p^*.
\end{align*}

\bigskip
\noindent \textbf{Proof of 2.} We have
\begin{align*}
	\Prob(Y_x = 1 \mid X=1-x,W=w) &= \frac{p^* - p_{1 \mid x,w}p_{x \mid w}}{1-p_{x \mid w}} \\
	&> \frac{\underline{P}_x^c(1 \mid w) - p_{1 \mid x,w}p_{x \mid w}}{1-p_{x \mid w}}\\
	 &= \frac{p_{1 \mid x,w}p_{x \mid w}}{1-p_{x \mid w}}\left(\frac{1}{\min\{p_{x \mid w}+c,1\}} - 1\right) \\
	&\geq 0
\end{align*}
where the second line follows by our choice of $p^*$. Similarly,
\begin{align*}
	\frac{p^* - p_{1 \mid x,w}p_{x \mid w}}{1-p_{x \mid w}}
	&< \frac{\overline{P}_x^c(1 \mid w) - p_{1 \mid x,w}p_{x \mid w}}{1-p_{x \mid w}} \\
	&= \frac{\min\left\{\frac{p_{1 \mid x,w}p_{x \mid w}}{p_{x \mid w} - c}\indicator(p_{x \mid w}>c)+\indicator(p_{x \mid w}\leq c),p_{1 \mid x,w}p_{x \mid w} + (1-p_{x \mid w})\right\} - p_{1 \mid x,w}p_{x \mid w}}{1-p_{x \mid w}} \\
	&= \min\left\{ \frac{1}{1-p_{x \mid w}} \left[\frac{p_{1 \mid x,w}p_{x \mid w}}{p_{x \mid w} - c}\indicator(p_{x \mid w}>c)+\indicator(p_{x \mid w}\leq c) -p_{1 \mid x,w} p_{x \mid w}\right],1\right\} \\
	&\leq 1.
\end{align*}
Hence $\Prob(Y_x=1 \mid X=1-x,W=w)\in(0,1)$ is a valid probability which satisfies assumption A\ref{assn:continuity}.\ref{A1_2}$^\prime$.

\bigskip
\noindent \textbf{Proof of 3.} By Bayes' rule, we have
\[
	\P(X=1 \mid Y_x=1,W=w) = \frac{p_{1 \mid 1,w}p_{1 \mid w}}{\P(Y_x=1 \mid W=w)}.
\]
By the lower bound on $\P(Y_x=1 \mid W=w)$, we have
\begin{align*}
	\P(X=1 \mid Y_x=1,W=w)
		& \leq \frac{p_{1 \mid 1,w}p_{1 \mid w}}{\underline{P}_1^c(1 \mid w)} \\
		&= \min\{p_{1 \mid w}+c,1\} \\
		&\leq p_{1 \mid w} + c.
\end{align*}
By the upper bound on $\P(Y_x=1 \mid W=w)$, we have
\begin{align*}
	\Prob(X=1 \mid Y_x=1,W=w)
	& \geq \frac{p_{1 \mid 1,w}p_{1 \mid w}}{\overline{P}_1^c(1 \mid w)} \\
	&=\frac{p_{1 \mid 1,w}p_{1 \mid w}}{\min\left\{\frac{p_{1 \mid 1,w}p_{1 \mid w}}{p_{1 \mid w} - c}\indicator(p_{1 \mid w}>c)+\indicator(p_{1 \mid w}\leq c),p_{1 \mid 1,w}p_{1 \mid w} + (1-p_{1 \mid w})\right\}}.
\end{align*}
When $p_{1 \mid w}>c$, this gives
\begin{align*}
	\Prob(X=1 \mid Y_x=1,W=w)
	&\geq \max\left\{p_{1 \mid w}-c,\frac{p_{1 \mid 1,w}p_{1 \mid w}}{p_{1 \mid 1,w}p_{1 \mid w} + (1-p_{1 \mid w})}\right\}\\
	&\geq p_{1 \mid w} - c.
\end{align*}
If $p_{1 \mid w} \leq c$, then $0 \geq p_{1 \mid w} - c$ and hence $\P(X=1 \mid Y_x=1,W=w) \geq p_{1 \mid w} - c$ holds trivially.

Therefore $\P(X=1 \mid Y_x=1,W=w)  \in [p_{1 \mid w} - c,p_{1 \mid w}+c]$. A similar calculation yields the same result for $\P(X=1 \mid Y_x=0,W=w)$.
\end{proof}

\end{document}